\def\textsubscript#1%
\def\cdd{\mbox{\boldmath$\cdot$}~}
\newcommand{\bZ}{{\mathbb{Z}}}
\newcommand{\bQ}{{\mathbb{Q}}}
\newcommand{\bN}{{\mathbb{N}}}
\newcommand{\bT}{{\mathbb{T}}}
\newcommand{\cN}{{\mathcal{N}}}
\newcommand{\im}{\operatorname{im}}
\newcommand{\den}{\operatorname{den}}
\newcommand{\num}{\operatorname{num}}
\newcommand{\resultant}{\operatorname{resultant}}
\def\@oddfoot{\hfill}
\def\setshumei#1#2#3{%
  \shumeicount=\count0
  \def\@oddhead{%
    \raise-5pt\hbox to0pt{\vrule width\hsize height 0pt depth 0.4pt\hss}\relax
    \ifnum \shumeicount=\count0
      \raise-7pt\hbox to0pt{\vrule width\hsize height 0pt depth 0.4pt\hss}\relax
      #1
    \else
      \ifodd\count0
        #2
      \else
        #3
       \fi
     \fi
  }%
}
\def\@oddfoot{\hfill}
\def\setshujiao{%
  \shujiaocount=\count0
  \def\@oddfoot{%
      \ifodd\count0
      \else
      \fi
  }%
}
\def\biaoti#1#2#3#4{{
  \vspace*{0.3cm}
  \begin{flushleft} \Large\bf #1\end{flushleft}
  \vspace*{-0.2cm}
      \begin{flushleft}
      \bf #2
      \end{flushleft}
      \footnotetext{\hspace{-6mm} #3\\ #4}}}
\def\dshm#1#2#3#4
\def\drd#1#2
\def\tilde{\widetilde}
\def\epsilon{\varepsilon}
\begin{document}

\biaoti{Reducing Hyperexponential Functions \\ over Monomial Extensions$^*$} 
{\uppercase{Chen} Shaoshi $\cdd$ \uppercase{Du} Hao $\cdd$ \uppercase{Gao} Yiman $\cdd$ \uppercase{Li} Ziming}
{\uppercase{Chen} Shaoshi,
KLMM, Academy of Mathematics and Systems Science, Chinese Academy of Sciences, Beijing 100190, China; School of Mathematical Sciences, University of Chinese Academy of Sciences, Beijing 100049, China.
Email\,$:$ schen@amss.ac.cn. \\
\uppercase{Du} Hao,
School of Sciences, Beijing University of Posts and Telecommunications, Beijing 102206, China. Email\,$:$ haodu@bupt.edu.cn.\\
\uppercase{Gao} Yiman (corresponding author),
KLMM, Academy of Mathematics and Systems Science, Chinese Academy of Sciences, Beijing 100190, China; School of Mathematical Sciences, University of Chinese Academy of Sciences, Beijing 100049, China.
Email\,$:$ ymgao@amss.ac.cn. \\
\uppercase{Li} Ziming, 
KLMM, Academy of Mathematics and Systems Science, Chinese Academy of Sciences, Beijing 100190, China; School of Mathematical Sciences, University of Chinese Academy of Sciences, Beijing 100049, China.
Email\,$:$ zmli@mmrc.iss.ac.cn.}
{$^*$S. Chen was partially supported by the National Key Research and Development Project 2020YFA0712300, the NSFC grants (No. 12271511 and No. 11688101), CAS Project for Young
Scientists in Basic Research (Grant No. YSBR-034), and the CAS Fund of the Youth Innovation Promotion Association
(No. Y2022001). Hao Du by an NSFC grant (No.\ 12201065),  Y.\ Gao and Z.\ Li by two NSFC grants (No.\ 11971029 and No.\ 12271511).
}
\drd{DOI: }{Received: x x 20xx}{ / Revised: x x 20xx}


\dshm{2021}{XX}{Reducing Hyperexponential Functions over Monomial Extensions}{\uppercase{Chen, Shaoshi} $\cdd$ \uppercase{Du, Hao} $\cdd$ \uppercase{Gao, Yiman} $\cdd$ \uppercase{Li, Ziming}}
\Abstract{We extend the shell and kernel reductions for hyperexponential functions over the field of rational functions to
a monomial extension. Both of the reductions are incorporated  into one algorithm.
As an application, we present an additive decomposition in rationally hyperexponential towers.
The decomposition yields an alternative algorithm for computing elementary integrals over such towers. 
The alternative can find some elementary integrals that are unevaluated by the integrators in the latest versions of {\sc maple}
and {\sc mathematica}.}      

\Keywords{Additive decomposition, Hyperexponential function, Reduction, Symbolic integration}        



\section{Introduction}

Symbolic integration aims at  developing  algorithms to compute integrals in closed form.
One of its classical topics is to determine whether an integrand has an elementary integral,
and compute such an integral if there exists one.  Fundamental results on this topic are collected and reviewed in~\cite{RaSi2022}.
The monograph~\cite{Bron2005} presents algorithms for integrating transcendental functions.

In symbolic integration, an integrand~$f(x)$ is decomposed in one way or another as 
\begin{equation} \label{EQ:reduce}
 f = \frac{d g }{dx}  + r,
 \end{equation}
where $g$ and $r$ are functions of the same \lq\lq type\rq\rq\  as $f$'s, and $r$ is minimal in
some technical sense. Such a process is referred to as reduction for $f$ in this paper. 
A reduction \eqref{EQ:reduce} yields an additive decomposition $g^\prime + r$ of $f$, provided that $r=0$ if and only if the integral of $f$ is of the same \lq\lq type\rq\rq\ as $f$'s.

Let $C$ be a field of characteristic zero throughout the paper, and $C(x)$ be the field of rational functions in~$x$.
The Hermite-Ostrogradsky reduction in \cite[\S 2.2]{Bron2005} computes $g, r \in C(x)$ such that \eqref{EQ:reduce} holds.
Moreover, $r$ is proper with a squarefree denominator, and  the integral of~$f$ belongs to $C(x)$ if and only if $r$ is equal to zero. In other words, the Hermite-Ostrogradsky reduction
computes an additive decomposition of every element in $C(x)$.
A nonzero function is hyperexponential if its logarithmic derivative belongs to $C(x)$.
Nonzero rational functions are a special instance for hyperexponential functions.
The Hermite reduction in \cite[\S 4.2]{BCCLX2013} computes an additive decomposition of a hyperexponential function.
Reductions do not always yield additive decompositions. For instance, the algorithm {\bf HermiteReduce} in \cite[\S 5.3]{Bron2005} decomposes an integrand
as the sum of a derivative, a simple function and a reduced function.

The goal of this paper is to generalize several reductions for hyperexponential functions over~$C(x)$ to monomial extensions.
We extend and unify the shell and kernel reductions in~\cite{BCCLX2013, GLL2004} (see Theorem \ref{TH:sk}),
and generalize the Hermite reduction in \cite{BCCLX2013} to an additive decomposition algorithm in rationally hyperexponential towers (see Theorem \ref{TH:add}). 
A method is presented in Theorem \ref{TH:elem} for determining elementary integrals over such towers.
\begin{example}  \label{EX:elem}
Let
$$ y = \exp\left( \int \frac{1}{x^3-x-2}\right) \quad 
\text{and} \quad f=\frac{(x^3-x-3) \exp(x) }{(x^3-x-2)\left( \exp(x)+y\right)}.$$
Then
  \[\int f = \log\left(\exp(x)+y \right)-\sum_{\alpha^3 - \alpha - 2 =0} \frac{1}{3 \alpha^2-1} \log(x-\alpha), \]
  which is elementary over $\displaystyle C(x, \exp(x), y)$. However, neither ``int(\,)'' command in {\sc Maple} nor  ``Integrate[\, ]'' command in {\sc Mathematica} finds this closed form.
  We shall show how to find it by Theorems \ref{TH:add} and \ref{TH:elem}. 
\end{example}

The rest of this paper is organized as follows. We describe basic notions in symbolic integration,
and review shell, kernel and polynomial reductions for hyperexponential functions over the field of rational functions in Section \ref{SECT:pre}.
The shell and kernel reductions are extended and unified in Section \ref{SECT:sk}. As a generalization of the Hermite reduction for hyperexponential functions in \cite[\S 4.2]{BCCLX2013},
we present an algorithm for computing additive decompositions of all elements in a rationally hyperexponential tower in  Section \ref{SECT:rh}.

\section{Preliminaries} \label{SECT:pre}

This section consists of two parts. We present basic terminologies for symbolic integration in Section \ref{SUBSECT:basic}, and then recall reductions that will be generalized in Section \ref{SUBSECT:rrh}.

\subsection{Basic definitions and facts} \label{SUBSECT:basic}

In the sequel,  $F$ stands for a field of characteristic zero.  Let $f$ belong to $F(t)$, where $t$ is an indeterminate over $F$.
The numerator and denominator of $f$ are denoted by $\num(f)$ and $\den(f)$, respectively.  They are coprime polynomials in $F[t]$.
In particular, $\num(0)=0$ and $\den(0)=1$.
We say that $f$ is \emph{$t$-proper} if $\deg_t(\num(f))<\deg_t(\den(f))$.
Let $p$ be a  polynomial in $F[t]$ with positive degree.
The \emph{order} of $f$ at $p$ is defined to be $-m$ if $p^m \mid \den(f)$ but $p^{m+1} \nmid \den(f)$ for some $m \in \bZ^+$; while it is defined to be $m$ if $p^m \mid \num(f)$
but $p^{m+1} \nmid \num(f)$ for some $m \in \bN$.
The order is denoted by $\nu_p(f)$.

A \emph{derivation} $\delta$ on a field $F$ is an additive map from $F$ to itself satisfying the usual product rule $\delta (ab) = \delta(a) b+a \delta(b)$ for all $a,b \in F$.
The pair $(F,\, \delta)$ is called a \emph{differential field}.
An element of $F$ is called a \emph{constant} if its derivative is zero. The set of constants in $F$ is a subfield of $F$, which is denoted by $C_F$.
An element of $F$ is called a \emph{logarithmic derivative} if it is equal to $\delta(a)/a$ for some $a \in F \setminus \{0\}$. 
Let $(E,\, \Delta)$ and $(F,\, \delta)$ be two differential fields. We call $E$ a {\em differential field extension of $F$}
if $F \subset E$ and $\Delta |_F= \delta$. When there is no confusion, we still denote the derivation $\Delta$ on $E$ by $\delta$.

\noindent 
{\bf  Convention.} \emph{From now on, we let $(F, \, ^\prime)$ be a differential field}, and 
$F^\prime := \{ f^\prime \mid f \in F \}.$

Given an element $f$ of $F$, a reduction decomposes $f$ as $g^\prime + r$, where $g, r \in F$ and $r$ is minimal in some technical sense.
We call  $g^\prime + r$  an \emph{additive decomposition} of $f$,  and $r$ a \emph{remainder} of~$f,$ provided that $r=0$ if and only if $f \in F^\prime$.  
Remainders are not necessarily unique.
Algorithms for computing additive decompositions of all elements in $F$ are particularly useful to determine elementary integrals over $F$, as indicated  in
\cite[Theorem 6.1]{CDL2018} and \cite[Theorem 4.10]{DGLW2020}. 
The Hermite-Ostrogradsky reduction in \cite[\S 2.2]{Bron2005} yields an additive decomposition in $(C(x), d/dx)$. 
Additive decompositions are computed in a finite extension of $C(x)$ in \cite{CKK2016} and in primitive towers of some kinds over $C(x)$ in \cite{CDL2018,DGLW2020}.

Let $E$ be a differential field extension of $F$. An element  $t$ of $E$ is called a \emph{monomial} over~$F$ if it is transcendental over~$F$ and its derivative belongs to $F[t]$.
According to \cite[page 7]{Roth1976},
a monomial $t$ over $F$ is said to be \emph{regular} if $C_F = C_{F(t)}$.
A nonzero element of $E$  is said to be \emph{hyperexponential} over $F$ if its logarithmic derivative belongs to $F$.

Let $t$ be a monomial over $F$. For $p \in F[t]$ with $p \neq 0$,  $p$ is \emph{normal} if $\gcd(p, p^\prime)=1$.
It is \emph{special} if $p \mid p^\prime$.  Nonzero elements
of $F$ are both normal and special, and vice versa. They are said to be trivially normal and special.
Basic properties of normal and special polynomials are presented in \cite[\S 3.4]{Bron2005}.

An element $f$ of $F(t)$ is said to be \emph{normally $t$-proper}  if it is $t$-proper and every irreducible factor of $\den(f)$ is normal.
It is said to be \emph{$t$-simple} if it is $t$-proper with a normal denominator,
and it is \emph{$t$-reduced} if $\den(f)$ is special. Clearly, $t$-simple elements are normally $t$-proper and nonzero polynomials in $F[t]$ are $t$-reduced.
\begin{example} \label{EX:ns}
Let $t$ be a regular and hyperexponential monomial over $F$. Then $p \in F[t]$ is special if and only if $p = a t^m$ for some nonzero element $a \in F$ and $m \in \bN$ by \cite[Theorem 5.1.2]{Bron2005}.
The $t$-reduced elements in $F(t)$ are exactly Laurent polynomials in $t$ over $F$.
\end{example}

Let $f \in F(t)$ and $p$ be a nontrivial normal factor of $\den(f)$. By \cite[Theorem 4.4.2 (i)]{Bron2005},
\[  \nu_p(f^\prime) = \nu_p(f) - 1 < \nu_p(f) \le -1. \]
For brevity, the use of this fact will be
referred to as  \emph{an argument on orders} in the sequel. For instance,
$f \notin F(t)^\prime$ for every  nonzero and $t$-simple element $f \in F(t)$ by such an argument.

Let $y$ be hyperexponential over $F$.  Then $y$ can be expressed as
$\exp\left(\int f \right)$,
where $f$ stands for the logarithmic derivative $y^\prime/y$.
Issues related to integrating $y$ include:
(i)   determining whether $y \in F(y)^\prime$,
(ii)  reducing $y$ modulo $F(y)^\prime$,
(iii) developing an additive decomposition in $F(y)$,
(iv)  determining whether $y$ has an elementary integral over $F(y)$.

 Under some additional assumptions, we can replace the additive group $F(y)^\prime$ in reduction algorithms with its subgroup $\left\{ (a y)^\prime \mid a \in F \right\}$, as described in the following proposition.
 \begin{proposition} \label{PROP:similar}
 Let $y$ be a regular and hyperexponential monomial over $F$.
 Then 
 \[  y \in F(y)^\prime \, \Longleftrightarrow \, y = (a y)^\prime \quad \text{for some $a \in F$.} \]
 \end{proposition}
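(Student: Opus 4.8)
The implication $(\Leftarrow)$ is immediate: if $y = (ay)'$ for some $a \in F$, then $ay \in F(y)$ and $y \in F(y)'$. So the plan is to concentrate on the forward direction $(\Rightarrow)$. Assume $y \in F(y)'$, say $y = g'$ with $g \in F(y)$. Since $y$ is a regular hyperexponential monomial, by Example \ref{EX:ns} the special polynomials in $F[y]$ are exactly the monomials $c y^m$, so every denominator splits into a normal part and a power of $y$. I would first write $g = h + q$, where $h$ is the ``normal-denominator part'' of $g$ (numerator over the product of all nontrivial normal irreducible factors of $\den(g)$, with appropriate multiplicities) and $q$ is a Laurent polynomial in $y$ over $F$, i.e.\ the $y$-reduced part. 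The idea is to show that the normal part $h$ must in fact be zero, and then that the Laurent-polynomial part $q$ must reduce to a single term $ay$.

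For the normal part: suppose $h \neq 0$ and let $p$ be a nontrivial normal irreducible factor of $\den(h)$, with $\nu_p(h) = -m \le -1$. By an argument on orders (the fact recalled from \cite[Theorem 4.4.2 (i)]{Bron2005}), $\nu_p(h') = -m-1 \le -2$. Since $q$ is a Laurent polynomial in $y$ and $p \nmid y$ (as $p$ is normal and nontrivial, hence not a unit times $y$), we get $\nu_p(q') \ge 0$, so $\nu_p(g') = \nu_p(h' + q') = -m-1 \le -2$. But $g' = y$ is a monomial in $F[y]$, so $\nu_p(y) \ge 0$ for every such $p$ — a contradiction. Hence $h = 0$ and $g = q$ is a Laurent polynomial $\sum_{i} a_i y^i$ with $a_i \in F$.

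It remains to analyze $q' = \sum_i (a_i' + i a_i f) y^i = y$, where $f = y'/y \in F$. Here the key point is that $y$ is \emph{regular}, so $C_F = C_{F(y)}$; in particular no nonzero $F$-linear combination of distinct powers $y^i$ can lie in $F$, and more precisely, comparing $y$-degrees on both sides forces all coefficients $a_i' + i a_i f$ to vanish except the one at $i=1$, which must equal $1$. So $a_i' + i a_i f = 0$ for $i \neq 1$ and $a_1' + a_1 f = 1$. Writing $a := a_1$, the last equation says exactly $(ay)' = a'y + a y' = a'y + a f y = (a' + af) y = y$, which is the desired conclusion. (The equations for $i \neq 1$ just say $a_i y^i$ is a constant multiple of a solution of the homogeneous Risch-type equation; regularity of $y$ ensures this contributes nothing, but for the statement we only need the $i=1$ component.)

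I expect the main obstacle to be the bookkeeping in the second step — rigorously justifying that $g' = y \in F[y]$ forces the normal part of $g$ to be trivial, which amounts to controlling the orders at \emph{all} nontrivial normal irreducible factors of $\den(g)$ simultaneously and checking that the Laurent-polynomial part $q$ contributes nothing at such primes. Once the reduction to $g \in F[y,y^{-1}]$ is in place, matching coefficients of powers of $y$ and reading off the $i=1$ component is routine and yields $a = a_1$ with $(ay)' = y$.
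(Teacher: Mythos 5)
Your proof is correct and takes essentially the same route as the paper: first use an argument on orders to show that $g$ must be $y$-reduced (you spell this out by decomposing $g = h + q$ and showing the normal part $h$ vanishes), so $g$ is a Laurent polynomial in $y$, and then compare coefficients of powers of $y$ using transcendence to extract $a := a_1$ with $(a y)' = y$. The paper's proof performs exactly these two steps, just compressed into the single sentences ``$z$ is $y$-reduced by an argument on orders'' and ``it follows from the transcendence of $y$ that $z = a y$.''
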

 \begin{proof} By definition, $y \in F(y)^\prime$ if and only if there exists $z \in F(y)$ such that $y=z^\prime$.  
 Then $z$ is $y$-reduced by an argument on orders.
 So $z \in F[y^{-1}, y]$ by Example~\ref{EX:ns}. It follows from the transcendence of $y$ that $z  = a y$ for some $a \in F$. $\Box$ \end{proof}

\subsection{Shell and kernel reductions in $C(x)$} \label{SUBSECT:rrh}
The first three issues listed above are well-handled by the algorithms in \cite{BCCLX2013} when $F=C(x)$.

A rational function $\xi \in C(x)$ is \emph{differential reduced} if
$\gcd(\num(\xi)-i\den(\xi)^\prime, \den(\xi))=1$
for all $i \in \bZ$.  For $f \in C(x)$, there exist $\xi, \eta \in C(x)$ with $\eta \neq 0$ such that
(i) $f = \eta^\prime/\eta + \xi$, (ii) $\xi$ is differential reduced,
(iii)  $\num(\eta)$, $\den(\eta)$ and $\den(\xi)$ are coprime pairwise.

 It is  shown in \cite[\S 3]{GLL2004} that $\xi$ is unique and that $\eta$ is unique up to a multiplicative constant.
 They are called the \emph{kernel} and \emph{shell} of $f$, respectively.

Let $y$ be hyperexponential over $C(x)$, and $\xi, \eta$ be the kernel and shell of $y^\prime/y$, respectively.
Algorithm {\bf  ReduceCert} in \cite[\S 4]{GLL2004} decomposes
\begin{equation} \label{EQ:shell}
\eta  =  u^\prime + u \xi  +  \underbrace{ v  + \frac{p}{\den(\xi)}}_h,
\end{equation}
where $u, v  \in C(x)$,  $v$ is $x$-simple, $\den(v) \mid \den(\eta)$, and $p \in C[x]$, and
$h$ is minimal in the sense that $\den(v)$ divides $\den(\tilde v)$ if there is another reduction
$\eta  =  \tilde{u}^\prime + \tilde{u} \xi   +  \tilde{v}  + \tilde{p}/\den(\xi)$
for some $\tilde{u}, \tilde{v}  \in C(x)$ and $\tilde p \in C[x]$. In terms of hyperexponential functions,
\eqref{EQ:shell} can be expressed as
\[ y = \eta \exp\left(\int \xi   \right) = \left( u \exp\left( \int \xi \right) \right)^\prime + h \exp\left( \int \xi \right),\]
which is a reduction for $y$.
By \cite[Theorem 4]{GLL2004}, $y$ is the derivative of another hyperexponential element if and only if $v=0$ and
$z^\prime + z \xi = {p}/{\den(\xi)}$ for some $z \in C[x]$.
When this is the case, $y = \left(u\eta^{-1}y+z\eta^{-1}y\right)^\prime$.
The equality \eqref{EQ:shell} is called a shell reduction in \cite{BCCLX2013}.

To avoid solving the above differential equation for $z$, we recall the $C$-linear map
\[
\begin{array}{cccl}
\phi_\xi: & C[x]  & \rightarrow & C[x] \\ 
      &  a     &  \mapsto    &  \den(\xi) a^\prime + \num(\xi) a
\end{array}
\]
defined in \cite{BCCLX2013}.
The map is injective, because $\xi$ is differential reduced.  The injectivity allows us to construct a $C$-basis of $\im\left(\phi_\xi \right)$ in a straightforward manner.
This construction also yields a finite-dimensional $C$-linear subspace $\cN$ with $C[x] = \im\left(\phi_\xi\right) \oplus \cN$.
Let $q$ be the projection of $p$ in \eqref{EQ:shell} to $\cN$ with respect to the above direct sum. Then \eqref{EQ:shell} can be refined into
\[    \eta = w^\prime  + w \xi  + \underbrace{v + \frac{q}{\den(\xi)}}_r  \]
for some $w \in C(x)$. 
We call $r$ a {\em residual form} with respect to $\xi$.
Translating the above equality in terms of hyperexponential functions, we have
$
   y  = (w \eta^{-1} y)^\prime + r \eta^{-1} y.
$
By \cite[Lemma 11]{BCCLX2013}, $y$ is the derivative of another hyperexponential element over $C(x)$  if and only if $r = 0$. 
In other words,  $(w \eta^{-1} y)^\prime + r \eta^{-1} y$ is an additive decomposition of $y$. The algorithm for computing this additive decomposition
in \cite{BCCLX2013} is called the Hermite reduction for hyperexponential functions, which, together with the algorithm {\bf HermiteReduce} in \cite[\S 5.2]{Bron2005},
leads to an algorithm for computing additive decompositions in $C(x,y)$ whenever $y$ is a regular monomial over $C(x)$.

Another useful reduction is introduced on the way of computing telescopers for bivariate hyperexponential functions in \cite[\S 6]{BCCLX2013}.
Let $\xi \in C(x)$ be differential reduced. By \cite[Lemma 16]{BCCLX2013}, for every $p \in C[x]$ and $m \ge 1$, we can compute $u \in C(x)$ and $q \in C[x]$ such that
\begin{equation} \label{EQ:kernel}
 \frac{p}{\den(\xi)^m} = u^\prime + u \xi  + \frac{q}{\den(\xi)},
\end{equation}
which is called  a kernel reduction.
\section{Generalizing shell and kernel reductions} \label{SECT:sk}

This section consists of four parts. We generalize the shell reduction \eqref{EQ:shell} and notion of kernels to monomial extensions  in Sections \ref{SUBSECT:shell}
and \ref{SUBSECT:wn}, respectively. The kernel reduction \eqref{EQ:kernel}
is generalized in Section \ref{SUBSECT:kernel}. A generalized kernel-shell reduction in  Section \ref{SUBSECT:unify} unifies the algorithms in  Sections \ref{SUBSECT:shell} and \ref{SUBSECT:kernel}.

Throughout this section,  $t$ stands for a monomial over $F$.
For $f \in F(t)$, we denote by $V_f$ the additive group $\left\{ a^\prime + a f \, | \, a \in F(t) \right\}$.
Let $y$ be hyperexponential over $F(t)$ with logarithmic derivative $f$.
Then an element $a^\prime + a f$ corresponds to $(a y)^\prime$, because  $ \left( a y \right)^\prime = (a^\prime + a f) y. $

\subsection{A generalized shell reduction} \label{SUBSECT:shell}
First, we generalize the shell reduction \eqref{EQ:shell}  to $F(t)$ locally.
\begin{lemma} \label{LM:localshell}
Let $f \in F(t)$, $p \in F[t]$ be normal and coprime with $\den(f)$, and $m \in \bZ^+$. Then,
for every $q \in F[t]$,
there exist $v, w \in F[t]$ with $\deg_t(v) < \deg_t(p)$ such that
\[
 \frac{q}{p^m} \equiv \frac{v}{p} + \frac{w}{\den(f)} \mod V_{f}. \
\]
\end{lemma}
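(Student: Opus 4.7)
The plan is to argue by induction on $m$, using a Hermite-style reduction to decrement the pole order at $p$ one step at a time. The base case $m=1$ is handled by polynomial division: writing $q = hp + v$ with $v, h \in F[t]$ and $\deg_t(v) < \deg_t(p)$ gives $q/p = v/p + h = v/p + h\,\den(f)/\den(f)$, so the conclusion holds with $w := h\,\den(f)$ and no contribution from $V_f$. The case $\deg_t(p) = 0$ is trivial in the same manner.

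For the inductive step with $m \geq 2$, I would try to remove the order-$m$ pole at $p$ via an ansatz $a = b/p^{m-1}$ with $b \in F[t]$ and $\deg_t(b) < \deg_t(p)$. Expanding
\[
a' + af \;=\; \frac{b'}{p^{m-1}} \;-\; \frac{(m-1)\,b\,p'}{p^m} \;+\; \frac{b\,f}{p^{m-1}},
\]
the pole of $q/p^m - (a'+af)$ at $p$ drops to order at most $m-1$ precisely when $p \mid q + (m-1)\,b\,p'$. Since $p$ is normal, $\gcd(p,p') = 1$, and since $m-1 \neq 0$ in the characteristic-zero field $F$, the element $(m-1)p'$ is a unit modulo $p$; hence such a $b$ exists and may be chosen with $\deg_t(b) < \deg_t(p)$. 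The resulting remainder then takes the shape $N/(\den(f)\,p^{m-1})$ for some $N \in F[t]$, obtained by combining the three explicit terms over the common denominator $\den(f)\,p^{m-1}$.

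To apply the induction hypothesis, I would use $\gcd(p^{m-1}, \den(f)) = 1$ together with Bezout to split this remainder as $R_1 + R_2$, with $\den(R_1) \mid p^{m-1}$ and $\den(R_2) \mid \den(f)$. Then $R_1$ takes the form $q_1/p^{m-1}$ with $q_1 \in F[t]$, so the induction hypothesis gives $R_1 \equiv v/p + w_1/\den(f) \pmod{V_f}$, while $R_2$ is already of the form $w_2/\den(f)$. Adding these congruences yields the desired conclusion with $w := w_1 + w_2$.

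The main obstacle will be the bookkeeping around the secondary pole at $\den(f)$ introduced by the term $bf/p^{m-1}$: one must verify that it can be cleanly separated from the pole at $p$ at every reduction step via partial fractions and absorbed into the $w/\den(f)$ slot. This separation relies crucially on the coprimeness hypothesis $\gcd(p, \den(f)) = 1$. The degree bound $\deg_t(v) < \deg_t(p)$ is preserved automatically, since it is only enforced in the final base-case step.
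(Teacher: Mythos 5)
Your proof is correct and is essentially the same Hermite-style reduction the paper uses: your ansatz $a = b/p^{m-1}$ with $(m-1)bp' \equiv -q \pmod p$ is exactly the paper's choice $u_3 = -(m-1)^{-1}u_2$ coming from the Bézout relation $u_1 p + u_2 p' = q$, and both proofs then separate the $p$-pole from the $\den(f)$-pole by coprimeness before descending on $m$. The only cosmetic differences are that you organize the descent as a formal induction with the polynomial division done in the base case, while the paper iterates informally and performs the final division at the end, and that you impose $\deg_t(b) < \deg_t(p)$ on the ansatz (harmless, and in fact what the paper does later in Corollary~\ref{COR:localshell}).
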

\begin{proof} We set $v =  0 $ and $w= q \den(f)/p^{m} $ if $p \in F$. Set $v$ to be the remainder of $q$ by~$p$, and $w$ to be the product of $\den(f)$ and the quotient of $q$ by $p$ if $m=1$.

Assume that $\deg_t(p)>0$ and $m>1$. Since $p$ is normal,
 there exist $u_1, u_2\in F[t]$
\begin{equation} \label{EQ:bezout}
u_1p +u_2p^\prime=q.
\end{equation}
With the aid of integration by parts, we deduce that
\[\frac{q}{p^m}  = \frac{u_1}{p^{m-1}} + \frac{u_2 p^\prime}{p^m} =  \left(\frac{-(m-1)^{-1} u_2 }{p^{m-1}}\right)^\prime + \frac{(m-1)^{-1} u_2^\prime+u_1}{p^{m-1}}.\]
Setting $u_3 = -(m-1)^{-1} u_2$ and $u_4 = (m-1)^{-1} u_2^\prime+u_1$,  we have
\begin{align*}
\frac{q}{p^m} & = \left( \frac{u_3}{p^{m-1}} \right)^\prime + \frac{u_4}{p^{m-1}} \\
& = \left( \frac{u_3}{p^{m-1}} \right)^\prime +  \frac{u_3}{p^{m-1}} f - \frac{u_3 }{p^{m-1}}  f + \frac{u_4}{p^{m-1}} \\
              & \equiv  - \frac{u_3}{p^{m-1}} f + \frac{u_4}{p^{m-1}} \mod V_f \\
              &  \equiv  \frac{u_5}{p^{m-1}} + \frac{u_6}{\den(f)} \mod V_{f} \quad \text{for some $u_5, u_6 \in F[t]$}. 
\end{align*}
The last congruence is derived by a partial fraction decomposition for $-u_3f/p^{m-1}$ and the assumption $\gcd(p, \den(f))=1$.
Applying the same argument to $u_5/p^{m-1}$ inductively, we get
$$\frac{q}{p^m} \equiv \frac{\tilde{v}}{p} + \frac{\tilde{w}}{\den(f)} \mod V_{f}$$
 for some $\tilde v, \tilde w \in F[t]$.  Let $v$ be the remainder of $\tilde v$ and $p$.
 Then the lemma holds by a similar operation used in the case $m=1$. 
 $\Box$ 
 \end{proof}
 
 A special case of the above lemma plays a key role in Section \ref{SECT:rh}.

  \begin{corollary} \label{COR:localshell}
  With the notation introduced in Lemma~\ref{LM:localshell}, assume further that 
  $$\deg_t(t^\prime) \le 1, \,\, \deg_t(p)>\deg_t(q) \ge 0, \,\, \text{and} \,\, f \in F.$$
  Then $q/p^m \equiv v/p \mod V_f$ for some $v \in F[t]$ with $\deg_t(v)<\deg_t(p)$.
  \end{corollary}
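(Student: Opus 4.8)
The plan is to run an induction on $m$, re-using the construction in the proof of Lemma~\ref{LM:localshell} but this time tracking $t$-degrees; the upshot will be that under the extra hypotheses the $1/\den(f)$-component — which here, since $f\in F$, is an honest polynomial $w\in F[t]$ — is never produced, so the conclusion is exactly that $w$ can be taken to be $0$. For $m=1$ there is nothing to do, since $q/p^m=q/p$ and already $\deg_t q<\deg_t p$, so $v=q$ works. So assume $m>1$.

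Since $p$ is normal, I would pick a B\'ezout relation $u_1p+u_2p'=q$ with $u_1,u_2\in F[t]$ and then reduce $u_2$ modulo $p$ (correcting $u_1$ accordingly) so that $\deg_t u_2<\deg_t p$. The one elementary fact doing the work is that $\deg_t(t')\le1$ forces $\deg_t(a')\le\deg_t(a)$ for every $a\in F[t]$, in particular $\deg_t(p')\le\deg_t(p)$; combined with $\deg_t q<\deg_t p$ and $u_1p=q-u_2p'$ this yields $\deg_t u_1<\deg_t p$ as well. Now integration by parts, exactly as in Lemma~\ref{LM:localshell}, rewrites
\[
 \frac{q}{p^m}=\left(\frac{u_3}{p^{m-1}}\right)'+\frac{u_4}{p^{m-1}},\qquad u_3=-(m-1)^{-1}u_2,\quad u_4=(m-1)^{-1}u_2'+u_1,
\]
and subtracting $(u_3/p^{m-1})'+(u_3/p^{m-1})f\in V_f$ gives $q/p^m\equiv u_5/p^{m-1}\bmod V_f$ with $u_5=u_4-u_3f=(m-1)^{-1}(u_2'+u_2f)+u_1$. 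Because $f\in F$ and $\deg_t u_2',\deg_t(u_2f)\le\deg_t u_2<\deg_t p$, one gets $\deg_t u_5<\deg_t p$, so $u_5/p^{m-1}$ again satisfies the hypotheses of the corollary, now with exponent $m-1$; the induction hypothesis (which for $m=2$ is just the base case, $u_5/p$ already being of the desired shape) then finishes the argument.

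The only delicate point is this $t$-degree bookkeeping, and it is precisely there that $\deg_t(t')\le1$ is indispensable: for a general monomial $\deg_t(t')$ can be arbitrarily large, differentiation then raises $t$-degree, and a nonzero polynomial part genuinely reappears, so this is a real phenomenon rather than a defect of the estimate. Conceptually the corollary says that $q/p^m$ is $t$-proper and that, under the present hypotheses, reduction modulo $V_f$ against powers of the normal polynomial $p$ preserves $t$-properness; one could equivalently phrase the whole induction in terms of $t$-properness, which is arguably cleaner, but it rests on the same degree computations. Beyond this, the argument is routine.
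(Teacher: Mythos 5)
Your proof is correct and follows the same route as the paper's: you rerun the integration-by-parts step of Lemma~\ref{LM:localshell} while tracking $t$-degrees, using $\deg_t(t')\le 1$ (hence $\deg_t(a')\le\deg_t(a)$) and $f\in F$ to keep every intermediate numerator of degree less than $\deg_t(p)$, then close by induction on $m$. The only difference is that you spell out \emph{why} the B\'ezout cofactors $u_1,u_2$ can be taken with $\deg_t<\deg_t(p)$ (reduce $u_2$ mod $p$, then the bound on $u_1$ falls out from $u_1p=q-u_2p'$), a point the paper simply asserts.
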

\begin{proof}
Let us go through the second paragraph of the above proof with the additional assumptions in mind.
Since  $\deg_t(t^\prime) \le 1$, we have $\deg_t(p^\prime) \le \deg_t(p)$, which, together with $\deg_t(q) <  \deg_t(p)$,
 implies that $u_1$ and $u_2$ in \eqref{EQ:bezout} can be chosen to have
 degrees less than $\deg_t(p)$. Thus, $u_3$ and $u_4$ given in the proof of the above lemma satisfy the same degree constraints.
 Let~$\tilde u_5 = u_4 - u_3 f$. Then $\tilde u_5$ is a polynomial of degree less than $\deg_t(p)$, because $f \in F$.
 From 
  $$\frac{q}{p^m} \equiv \frac{\tilde{u}_5}{p^{m-1}} \mod V_{f},$$
  and a straightforward induction on $m$, the corollary follows. $\Box$ \end{proof}

\begin{example} \label{EX:Shell}
Let $F=C(x,y)$ with $x'=1$ and $y'=y$, and $t=\exp\left(x^2/2\right)$. Then $t$ is a regular and hyperexponential monomial  over $F$. Let
$f ={y}/{(t-x)}$ and $g={x}/{(t+1)^2}.$
Note that $t \nmid \den(g)$ and $\gcd(\den(g), \den(f))=1$.
So Lemma \ref{LM:localshell} is applicable. Then the algorithm implicitly described in the proof of Lemma~\ref{LM:localshell} yields.
\[
g =\left(\frac{1}{t+1}\right)'+\frac{1}{t+1}f+ \frac{x^2+x+y}{(x+1)(t+1)}-\frac {y}{(x+1)(t-x)}  \equiv \frac{v}{t+1} +  \frac {w}{t-x}\mod V_f,
\] 
where $v = (x^2+x+y)/(x+1)$ and $w = -y/(x+1)$. 
\end{example}

The next lemma helps us prove certain minimality and uniqueness.
\begin{lemma}   \label{LM:unique}
Let $f, h \in F(t)$, $h$ be $t$-simple, and $\gcd(\den(f), \den(h))=1$.  
If there exists a $t$-reduced element $r \in F(t)$ such that $h + r/\den(f) \in V_f$, then $h=0$.
\end{lemma}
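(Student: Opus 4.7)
My plan is to prove the contrapositive by an argument on orders at a suitable irreducible factor. Assume for contradiction that $h\neq 0$. Since $h$ is $t$-simple, $\den(h)$ is normal, and therefore squarefree (if $p^2\mid\den(h)$ then $p\mid\gcd(\den(h),\den(h)^\prime)$, contradicting normality when $\deg_t p>0$). Because $h$ is $t$-proper, $\den(h)$ cannot be a constant, so I can pick an irreducible factor $p$ of $\den(h)$ with $\deg_t(p)>0$; this $p$ is normal, and by hypothesis $\gcd(p,\den(f))=1$. Since $\den(h)$ is squarefree, $\nu_p(h)=-1$.

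Next I would control the order at $p$ of the other summand on the right. The element $r$ is $t$-reduced, so $\den(r)$ is special. A polynomial that is both normal and special must lie in $F$, so no nontrivial normal $p$ can divide $\den(r)$; hence $\nu_p(r)\ge 0$. Combined with $\nu_p(1/\den(f))=0$, this gives $\nu_p(r/\den(f))\ge 0$, and therefore
\[
\nu_p\bigl(h+r/\den(f)\bigr)=\nu_p(h)=-1.
\]

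Now I write $h+r/\den(f)=a^\prime+af$ with $a\in F(t)$ and split on $\nu_p(a)$. Since $p\nmid\den(f)$, we have $\nu_p(f)\ge 0$, so $\nu_p(af)\ge\nu_p(a)$. If $\nu_p(a)\ge 0$, then $p$ appears in no denominator of $a$, hence also in no denominator of $a^\prime$ (differentiation squares the denominator), so $\nu_p(a^\prime+af)\ge 0$, contradicting $=-1$. If $\nu_p(a)<0$, the argument on orders (recalled before Proposition \ref{PROP:similar}, applicable because $p$ is normal) yields $\nu_p(a^\prime)=\nu_p(a)-1<\nu_p(a)\le\nu_p(af)$, so $\nu_p(a^\prime+af)=\nu_p(a)-1\le -2$, again contradicting $=-1$. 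Both cases fail, so no such $p$ exists, forcing $\den(h)\in F^\times$ and then $h=0$ by $t$-properness.

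The only delicate step is the justification that $\nu_p(r/\den(f))\ge 0$, which relies on observing that a normal irreducible of positive degree cannot divide a special polynomial; the rest is a standard valuation case split of the kind the paper already uses repeatedly.
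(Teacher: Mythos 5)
Your proof is correct and takes essentially the same route as the paper's: pick a nontrivial normal irreducible factor $p$ of $\den(h)$ with $\nu_p(h)=-1$, observe that $\nu_p(r/\den(f))\ge 0$, and then derive a contradiction via the argument on orders applied to $a^\prime + af$. You spell out some details the paper leaves implicit (why $\nu_p(r)\ge 0$, and the case split on the sign of $\nu_p(a)$), but the underlying idea is identical.
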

\emph{Proof.}  Since $h + r/\den(f) \in V_f$, there exists $a \in F(t)$ such that
\begin{equation} \label{EQ:gosper}
h + \frac{r}{\den(f)} = a^\prime + a f.
\end{equation}
Suppose that $h$ is nonzero. Then $\nu_p(h)=-1$ for some nontrivial normal polynomial $p$, because~$h$ is $t$-simple.
 It follows from $\gcd(\den(h), \den(f))=1$ and \eqref{EQ:gosper} that $\nu_p(a)<0$.  By an argument on orders,  
 the left and right-hand sides of \eqref{EQ:gosper} have distinct orders at $p$, a contradiction. $\Box$

A generalized shell reduction in $F(t)$ is described in the following proposition.
\begin{proposition} \label{PROP:shell}
Let $f, g \in F(t)$. If $\den(g)$ is free of any nontrivial special factor and  coprime with $\den(f)$,
then there exists a unique $t$-simple $h \in F(t)$ and $q \in F[t]$ such that
\[
 g \equiv h + \frac{q}{\den(f)} \mod V_f  \quad \text{and} \quad \den(h) \mid  \den(g).
\]
\end{proposition}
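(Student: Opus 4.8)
The plan is to derive existence from a partial-fraction decomposition of $g$ together with a term-by-term application of Lemma~\ref{LM:localshell}, and to derive uniqueness (of the $t$-simple part $h$ only; note that $q\in F[t]$ can certainly not be unique) from Lemma~\ref{LM:unique}.

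For existence, I would first write $g = g_0 + \sum_{i=1}^{n} b_i/p_i^{m_i}$ with $g_0 \in F[t]$, with $p_1,\dots,p_n$ the distinct monic irreducible factors of $\den(g)$, with $m_i\in\bZ^+$ the multiplicity of $p_i$ in $\den(g)$, and with $\deg_t(b_i)<\deg_t(p_i^{m_i})$. Because $\den(g)$ is coprime with $\den(f)$, every $p_i$ is coprime with $\den(f)$; and because $\den(g)$ has no nontrivial special factor, every irreducible $p_i$ is normal (an irreducible polynomial of positive degree is either normal or special; see \cite[\S 3.4]{Bron2005}). Thus Lemma~\ref{LM:localshell} applies to each summand and produces $v_i,w_i\in F[t]$ with $\deg_t(v_i)<\deg_t(p_i)$ and
\[
 \frac{b_i}{p_i^{m_i}} \;\equiv\; \frac{v_i}{p_i} + \frac{w_i}{\den(f)} \mod V_f .
\]
Setting $h := \sum_{i=1}^{n} v_i/p_i$ and $q := g_0\den(f) + \sum_{i=1}^{n} w_i \in F[t]$ then gives $g \equiv h + q/\den(f) \mod V_f$. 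It remains to observe that $h$ has the required shape: each $v_i/p_i$ is $t$-proper and the $p_i$ are pairwise coprime, so $h$ is $t$-proper and $\den(h)$ divides $p_1\cdots p_n$, which divides $\den(g)$; moreover $p_1\cdots p_n$ is a squarefree product of normal irreducibles, hence normal, and therefore so is its divisor $\den(h)$ (cf.\ \cite[\S 3.4]{Bron2005}). So $h$ is $t$-simple with $\den(h)\mid\den(g)$.

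For uniqueness, suppose $h_1 + q_1/\den(f) \equiv h_2 + q_2/\den(f) \mod V_f$ with $h_1,h_2$ $t$-simple, $\den(h_i)\mid\den(g)$, and $q_1,q_2\in F[t]$. Put $h := h_1-h_2$ and $r := q_1-q_2$, so that $h + r/\den(f) \in V_f$. I would then check that $h$ is again $t$-simple: it is $t$-proper; its denominator divides a least common multiple of the normal polynomials $\den(h_1)$ and $\den(h_2)$, which is a squarefree product of normal irreducibles and hence normal; and $\den(h)$ divides $\den(g)$ (both $\den(h_i)$ do), so it is coprime with $\den(f)$. Finally $r\in F[t]$ has denominator $1$, which is trivially special, so $r$ is $t$-reduced. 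Lemma~\ref{LM:unique}, applied to $f$, $h$ and $r$, then yields $h=0$, i.e.\ $h_1=h_2$, which is exactly the claimed uniqueness.

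The one point that needs care is keeping the ``$h$-piece'' $t$-simple throughout: one must verify that Lemma~\ref{LM:localshell} returns a $t$-proper term over a denominator that is a product of distinct normal factors of $\den(g)$, and that in the uniqueness step the difference $h_1-h_2$ is again $t$-simple so that Lemma~\ref{LM:unique} can be invoked. Both reduce to the stability of normality under passage to divisors and to products of pairwise coprime factors; once this is in hand, the rest is routine bookkeeping with partial fractions and no genuine obstacle remains.
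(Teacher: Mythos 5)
Your argument is correct and takes essentially the same route as the paper: existence by applying Lemma~\ref{LM:localshell} term-by-term to a partial fraction decomposition of $g$, and uniqueness of the $t$-simple part via Lemma~\ref{LM:unique}. The paper works at the level of the squarefree partial fraction decomposition rather than the full irreducible one, but this makes no essential difference; your version just spells out the normality bookkeeping that the paper leaves implicit.
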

\begin{proof} Applying Lemma~\ref{LM:localshell} to each $t$-proper component in the squarefree partial fraction decomposition of $g$,  we see that there exists a $t$-simple $h \in F(t)$
and a polynomial $q \in F[t]$ such that $g \equiv  h + q/\den(f)  \mod V_f$,  
and that $\den(h)$ divides $\den(g)$.
The uniqueness of $h$ is immediate from Lemma \ref{LM:unique}.  $\Box$
\end{proof}

\begin{corollary} \label{COR:shell}
 With the notation introduced in Proposition~\ref{PROP:shell}, assume further that $\deg_t(t^\prime) \le 1$, $f \in F$ and that $g$ is normally $t$-proper.
  Then there exists a unique $t$-simple element $h \in F[t]$ such that $g \equiv h \mod V_f$.
\end{corollary}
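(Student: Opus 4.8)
The plan is to derive this corollary as a specialization of Proposition~\ref{PROP:shell} together with Corollary~\ref{COR:localshell}, exploiting the extra hypotheses $\deg_t(t^\prime)\le 1$, $f\in F$, and $g$ normally $t$-proper. First I would invoke Proposition~\ref{PROP:shell}: since $g$ is normally $t$-proper, $\den(g)$ has no nontrivial special factor, so (after checking that $\gcd(\den(g),\den(f))=1$, which holds vacuously when $f\in F$ since then $\den(f)=1$) there is a unique $t$-simple $h\in F(t)$ and some $q\in F[t]$ with $g\equiv h+q/\den(f)\bmod V_f$ and $\den(h)\mid\den(g)$. Because $f\in F$ forces $\den(f)=1$, the term $q/\den(f)$ is just the polynomial $q$, so already $g\equiv h+q\bmod V_f$ with $h$ $t$-simple. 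The remaining work is to show that this polynomial part $q$ can be absorbed, i.e.\ that $q\equiv 0\bmod V_f$, equivalently $q\in V_f$; this would give $g\equiv h\bmod V_f$ with $h$ $t$-simple, and then I must upgrade ``$h\in F(t)$ is $t$-simple'' to ``$h\in F[t]$''.

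For the absorption step I would argue directly that every polynomial lies in $V_f$ when $f\in F$ and $\deg_t(t')\le 1$. Writing $f=b\in F$, the group $V_f=\{a'+ab\mid a\in F(t)\}$; given $q=\sum c_i t^i\in F[t]$ one seeks $a\in F[t]$ with $a'+ab=q$. Since $t'\in F[t]$ has $t$-degree $\le 1$, the map $a\mapsto a'+ab$ preserves or lowers $t$-degree in a triangular fashion, and one solves for the coefficients of $a$ recursively from the top down; alternatively this is exactly the content of the inductive mechanism already run in the proof of Corollary~\ref{COR:localshell}, applied with $p$ replaced by a large power of an arbitrary irreducible (or one simply re-reads that proof noting it reduces the $t$-degree of the polynomial numerator each step when $f\in F$). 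Either way $q\in V_f$. Finally, since $h$ is $t$-simple its denominator is normal and divides $\den(g)$, and $\den(g)$ is (up to units) a product of normal irreducibles none of which can actually appear: by an argument on orders, if $\nu_p(h)=-1$ for a nontrivial normal $p$ then $\nu_p(h')=-2$, and since $g$ is $t$-proper with $g-h\in F[t]\subseteq V_f$ we would get $\nu_p(g)=-1$ as well — wait, that is allowed — so instead I would keep $h$ as the $t$-simple remainder and note that the statement to be proved only asserts $h\in F[t]$, which, combined with $t$-simplicity (denominator normal), means $\den(h)=1$; but a $t$-simple element with trivial denominator is a polynomial, so I must instead show the remainder can be taken polynomial.

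Re-examining: the right reading is that under the added hypotheses the $t$-simple part $h$ of Proposition~\ref{PROP:shell} is automatically in $F[t]$. I would establish this by running the local reduction of Corollary~\ref{COR:localshell} on each $t$-proper component in the squarefree partial fraction decomposition of $g$: Corollary~\ref{COR:localshell} tells us that each such component $q_j/p_j^{m_j}$ with $\deg_t q_j<\deg_t p_j$ satisfies $q_j/p_j^{m_j}\equiv v_j/p_j\bmod V_f$, but in fact — and this is the crux — I would push the induction one step further to kill the last $v_j/p_j$ as well, or show directly that the contributions recombine into a polynomial, using that $g$ normally $t$-proper means every $p_j$ is normal and one can iterate the Bézout-plus-integration-by-parts step of Corollary~\ref{COR:localshell} down to exponent zero because $f\in F$ keeps $\tilde u_5=u_4-u_3f$ polynomial at each stage. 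Summing the (polynomial) outputs over $j$ and adding the genuine polynomial part of $g$ yields a polynomial $h\in F[t]$ with $g\equiv h\bmod V_f$; uniqueness of a $t$-simple representative is then Lemma~\ref{LM:unique} (two such would differ by a $t$-simple element lying in $V_f$, forcing it to be $0$). The main obstacle is precisely this last claim — that when $f\in F$ the local shell reduction terminates at a \emph{polynomial} rather than merely a $t$-simple element — so I would spend the bulk of the proof making the downward induction on the exponent $m$ explicit and checking the degree bookkeeping under $\deg_t(t')\le 1$; everything else is quotation of Proposition~\ref{PROP:shell}, Corollary~\ref{COR:localshell}, and Lemma~\ref{LM:unique}.
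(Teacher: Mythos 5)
The statement as printed contains a slip: a nonzero element of $F[t]$ is never $t$-proper, hence never $t$-simple, so ``$t$-simple $h\in F[t]$'' can only mean $h=0$, which is clearly not intended (take $g=1/(t^2+1)$ with $t$ primitive and $f=0$). The intended reading is $h\in F(t)$, and indeed Theorem~\ref{TH:sk}(iv) cites this corollary as producing a ``$t$-simple $h\in F(t)$''. You noticed the tension but then tried to force the literal reading, and that is where the proof goes wrong.

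Two of the steps you propose are false. First, the claim that ``every polynomial lies in $V_f$ when $f\in F$ and $\deg_t(t')\le 1$'' does not hold: solving $a'+af=q$ coefficient-by-coefficient in $F[t]$ requires solving first-order linear ODEs in $F$, which a general differential field $F$ does not permit, so the map $a\mapsto a'+af$ on $F[t]$ is not surjective. Second, you cannot ``push the induction one step further to kill the last $v_j/p_j$'': the base case of Corollary~\ref{COR:localshell} is $m=1$, and a nonzero $t$-simple element such as $v_j/p_j$ never lies in $V_f$ (this is exactly Lemma~\ref{LM:unique} with $r=0$, via an argument on orders). The correct argument is shorter than what you wrote and you actually circle near it: since $f\in F$, $\den(f)=1$, so $\gcd(\den(g),\den(f))=1$ trivially and Proposition~\ref{PROP:shell} applies; since $g$ is normally $t$-proper, each squarefree partial-fraction component of $g$ is $a/p^m$ with $p$ normal and $\deg_t a<\deg_t p$, so Corollary~\ref{COR:localshell} applies to each component and gives $a/p^m\equiv v/p\bmod V_f$ with \emph{no} polynomial leftover at all; summing yields a $t$-simple $h\in F(t)$ with $g\equiv h\bmod V_f$, and uniqueness is Lemma~\ref{LM:unique}. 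There is no polynomial $q$ to absorb and no reason to expect $h$ to be a polynomial.
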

\begin{proof} Since $f \in F$, we have $\gcd(\den(g), \den(f))=1$. Since $\den(g)$ has no nontrivial special factor,
the above proposition is applicable. Each component in the squarefree partial fraction decomposition of $g$ is of the form
$a/p^m$ for some $a, p \in F[t]$ with $p \mid \den(g)$ and $\deg(a)<\deg(p)$, because $g$ is $t$-proper. 
The corollary follows from Corollary \ref{COR:localshell} and the above proposition.  $\Box$
\end{proof}

For a convenience of later references,
we specify the input and output of the generalized shell reduction. Its pseudo-code can be easily written down according to the proofs of Lemma~\ref{LM:localshell} and Proposition~\ref{PROP:shell}.

\medskip \noindent
{\bf Algorithm GSR} (Generalized Shell Reduction)

\smallskip \noindent
{\bf Input:} a monomial extension $F(t)$ and $f, g \in F(t)$, where $\den(g)$ has no nontrivial special factor and is coprime with $\den(f)$

\smallskip \noindent
{\bf Output:} $(a, h, q)$, where $a, h \in F(t)$ and $q \in F[t]$ such that
$g = a^\prime + a f + h + q/\den(f)$, and $h$ is $t$-simple with $\den(h)|\den(g)$

We can set $q$ in the output of Algorithm {\bf GSR} to be zero if $\deg_t(t^\prime) \le 1$, $f \in F$ and $g$ is normally $t$-proper
by Corollary \ref{COR:shell}.

\subsection{Weak normalization} \label{SUBSECT:wn}

In this subsection, we adapt some results scattered in \cite[\S 6.1]{Bron2005}  to our later use.
An element $f$ of $F(t)$ is said to be \emph{weakly normalized} (resp.\ normalized) if 
$\gcd(\num(f) - i \den(f)^\prime, \den(f))=1$ for all $i \in \bZ^+$ (resp.\  $i \in \bZ$).
Note that $f$ is normalized if and only if $f$ is differential reduced when $F=C$ and $t=x$.
We prefer the word \lq\lq normalized\rq\rq\ rather than the phrase \lq\lq differential reduced\rq\rq, because
the former is concise and compatible with the phrase \lq\lq weakly normalized\rq\rq\ coined in \cite[Definition 6.1.1]{Bron2005}.

The next lemma enables us to generalize the notion of kernels and that of shells from rational functions to elements in a monomial extension.
\begin{lemma} \label{LM:residue}
Let $f \in F(t)$ and $\lambda \in F$. Then $\num(f)-\lambda \den(f)^\prime$ and  $\den(f)$ are not coprime if and only if there exists a nontrivially normal and irreducible polynomial $p$ such that
\begin{equation} \label{EQ:order}
\nu_p(f)=-1 \quad \text{and} \quad \nu_p \left( f - \lambda {p^\prime}/{p} \right) \ge 0.
\end{equation}
\end{lemma}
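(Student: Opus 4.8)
The statement is a local–global principle: the resultant-type condition $\gcd(\num(f)-\lambda\den(f)^\prime,\den(f))\neq 1$ should be detected by a single irreducible factor $p$ of $\den(f)$ at which $f$ has a simple pole whose residue is exactly $\lambda$. The natural approach is to pass to the irreducible factors of $\den(f)$ and analyze the order (valuation) $\nu_p$ on both sides. I would split into the two implications.

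For the direction ``$\Leftarrow$'': assume such a nontrivially normal irreducible $p$ exists with $\nu_p(f)=-1$ and $\nu_p(f-\lambda p^\prime/p)\ge 0$. Write $f = a/(p\,b)$ with $p\nmid a$, $p\nmid b$, so $\den(f)=p\,b$ up to a unit (absorbing a unit into $a$). The goal is to show $p \mid \num(f)-\lambda\den(f)^\prime$. Compute $\den(f)^\prime=(pb)^\prime = p^\prime b + p b^\prime \equiv p^\prime b \pmod p$ and $\num(f)=a$, so $\num(f)-\lambda\den(f)^\prime \equiv a - \lambda p^\prime b \pmod p$. On the other hand, $f-\lambda p^\prime/p = (a - \lambda p^\prime b)/(pb)$, and the hypothesis $\nu_p(f-\lambda p^\prime/p)\ge 0$ together with $p\nmid b$ forces $p \mid a-\lambda p^\prime b$; since $p$ is normal (so $\gcd(p,p^\prime)=1$, hence $p$ is not a common factor with $\den(f)^\prime$ in a trivial way), this is exactly the congruence we want. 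Hence $p$ is a common factor of $\num(f)-\lambda\den(f)^\prime$ and $\den(f)$.

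For the direction ``$\Rightarrow$'': suppose $\gcd(\num(f)-\lambda\den(f)^\prime,\den(f))\neq 1$, and let $p$ be a common irreducible factor. First I must argue $p$ is normal: write $\den(f)=p^m\,b$ with $p\nmid b$, $m\ge 1$. Then $\den(f)^\prime = m p^{m-1} p^\prime b + p^m b^\prime = p^{m-1}(m p^\prime b + p b^\prime)$. If $m\ge 2$, then $p \mid \den(f)^\prime$, and since $p\mid \num(f)-\lambda\den(f)^\prime$ we would get $p\mid\num(f)$, contradicting $\gcd(\num(f),\den(f))=1$; so $m=1$, i.e.\ $\nu_p(f)=-1$. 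Moreover if $p$ were special, then $p\mid p^\prime$, so again $p\mid \den(f)^\prime$ and the same contradiction; so $p$ is normal (nontrivially, since $\deg_t p>0$). With $m=1$, $\den(f)=pb$, $\den(f)^\prime \equiv p^\prime b \pmod p$, and $p\mid \num(f)-\lambda p^\prime b$, i.e.\ $p \mid a - \lambda p^\prime b$ where $a=\num(f)$. Then $f-\lambda p^\prime/p = (a-\lambda p^\prime b)/(pb)$ has numerator divisible by $p$ and denominator with $\nu_p = 1$, so $\nu_p(f-\lambda p^\prime/p)\ge 0$, giving \eqref{EQ:order}.

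**Main obstacle.** The only genuinely delicate point is the bookkeeping around units and the fact that $p$ being \emph{normal} is precisely what rules out the degenerate behavior $p\mid\den(f)^\prime$ (from either a higher power of $p$ in $\den(f)$ or from $p\mid p^\prime$). Everything else is a short computation with $\nu_p$ and the product rule; I would make sure to invoke coprimality of $\num(f)$ and $\den(f)$ at exactly the step where a higher pole order or a special factor would otherwise force $p\mid\num(f)$. If the paper later wants $p$ irreducible in the statement, I would note that a nontrivially normal common factor can always be refined to an irreducible one, since every factor of a normal polynomial is normal.
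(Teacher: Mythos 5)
Your proof is correct and follows essentially the same approach as the paper's: both hinge on the identity $f-\lambda p^\prime/p=\bigl(\num(f)-\lambda p^\prime q\bigr)/\den(f)$ with $\den(f)=pq$, and both extract divisibility by $p$ from the order hypothesis in one direction and from the coprimality of $\num(f)$ and $\den(f)$ in the other. The only cosmetic difference is in the $\Rightarrow$ direction, where you split explicitly into ``multiplicity $\ge 2$'' and ``$p$ special'' cases, while the paper deduces $p\nmid p^\prime q$ in one stroke (which simultaneously gives $p$ normal and $\nu_p(f)=-1$); also, the normality remark you insert in the $\Leftarrow$ direction is not actually used there, but that is harmless.
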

\begin{proof} Let $p$ be a factor of $\den(f)$. Then $\den(f)=pq$ for some $q \in F[t]$, and 
\begin{equation} \label{EQ:divide}
f - \lambda \frac{p^\prime}{p}  = \frac{\num(f) - \lambda p^\prime q}{\den(f)}.
\end{equation}

Assume that $p$ is nontrivially normal and irreducible, and that it satisfies the constraints in~\eqref{EQ:order}.
By \eqref{EQ:order} and \eqref{EQ:divide},  $p$ divides $\num(f) - \lambda p^\prime q$,
which, together with $\nu_p(f)=-1$, implies that  $p$ is a common factor of $\num(f)-\lambda \den(f)^\prime$ and  $\den(f)$.
Conversely, let $p$ be a nontrivial irreducible factor of  $\gcd\left(\num(f)-\lambda \den(f)^\prime, \den(f)\right)$.
Since $\num(f)-\lambda \den(f)^\prime = \num(f) - \lambda(p^\prime q + p q^\prime),$
we see that $\num(f) - \lambda p^\prime q$ is divisible by $p$, which, together with $p \nmid \num(f)$, implies that $p \nmid p^\prime q$.
Thus $p \nmid p^\prime$ and $p \nmid q$, which imply that $p$ is normal and $\nu_p(f)=-1$, respectively.
The inequality $\nu_p \left( f - \lambda {p^\prime}/{p} \right) \ge 0$ in~\eqref{EQ:order} holds owing to \eqref{EQ:divide}, $\nu_p(f)=-1$ and $p \mid \left(\num(f) - \lambda p^\prime q\right)$.  $\Box$
\end{proof}

We present an algorithm to construct $\xi, \eta \in F(t)$ with $\eta \neq 0$ for a given $f \in F(t)$
such that (i) $f = \eta^\prime/ \eta + \xi,$ (ii) $\xi$ is weakly normalized (resp.\ normalized),
(iii) $\den(\eta)$  is free of any nontrivial special factor,
(iv) $\gcd(\den(\eta), \den(\xi))=1$ and $\gcd(\num(\eta), \den(\xi))=1$.

\medskip \noindent
{\bf Algorithm GKS} (Generalized Kernel and Shell)

\smallskip \noindent
{\bf Input:} a monomial extension $F(t)$ and $f \in F(t)$

\small \noindent
{\bf Output:} $\xi, \eta \in F(t)$ satisfy the four requirements listed above
\begin{itemize}
\item[(1)]  {\bf if} $f \in F$ {\bf then} {\bf return} $f, 1$
\item[(2)]  $\xi \leftarrow f$ and $\eta \leftarrow 1$
\item[(3)]  $g \leftarrow $ the product of the normal and irreducible factors of $\den(\xi)$ with multiplicity 1
\item[(4)]  factor $g$ over $F$ to get its nontrivial irreducible factors: $g_1, \ldots, g_k$
\item[(5)]  {\bf for} $i$ {\bf from} $1$ {\bf to} $k$ {\bf do}
\item[]
\begin{itemize}
 \setlength\leftskip{0.5em} 
 \item[(5.1)] $p \leftarrow \num\left(\xi - z g_i^\prime/g_i\right),$ where $z$ is a constant indeterminate
\item[(5.2)]  $r \leftarrow$ the remainder of $p$ by $g_i$
\item[(5.3)] set $r=0$ to obtain a system of linear equations in $z$ over $F$.
\item[(5.4)] {\bf if} the system has a solution $m \in \bZ^+$ (resp.\ $m \in \bZ$) {\bf then}
                  \begin{itemize}
                   \item[] $\eta \leftarrow \eta g_i^{m}$ and $\xi \leftarrow \xi  - m g_i^\prime/g_i$
                   \end{itemize}
                  {\bf end if}
\end{itemize}
{\bf end do}
\item[(6)] {\bf return} $\xi, \eta$
\end{itemize}

\noindent
Note that $g$ in step (3) of Algorithm {\bf GKS} can be found as follows.  Compute
$$w = \frac{\den(f)}{\gcd(\den(f),\den(f)^\prime)},$$
which is  the product of all normal and irreducible factors of $\den(f)$ by \cite[Lemma 3.4.4]{Bron2005}. Then
$g$ is equal to $w/\gcd(w, \den(f)^\prime)$
by a straightforward calculation.
The correctness of the algorithm then follows from Lemma \ref{LM:residue}. We call $\xi$ and $\eta$ computed by Algorithm {\bf GKS}$(F(t),f)$ the
{\em weakly normalized (resp.\ normalized) kernel and the corresponding shell in $F(t)$}, respectively.
Searching for $m \in \bZ^+$ in step (5.3) finds a weakly normalized kernel;
while looking for $m \in \bZ$, we get a normalized one.

\begin{example}
Let $F=C(x)$ and $t$ be a regular and hyperexponential monomial over $F$ with $t'/t = 1/(x^2+1)$. Applying Algorithm {\bf GKS} to 
$$f = \frac{x^3t+x^2t+2xt+t+1}{(xt+1)(x^2+1)}$$
yields the  weakly normalized kernel  $1/(x^2+1)$
 and shell $xt+1$. 
\end{example}
\subsection{A generalized kernel reduction} \label{SUBSECT:kernel}
We extend the kernel reduction~\eqref{EQ:kernel} to $F(t)$.
\begin{proposition}  \label{PROP:kernel}
Let $f \in F(t)$ be weakly normalized. Then, for every $p \in F[t]$ and a positive  integer $m$, there exists $q \in F[t]$ such that
${p}/{\den(f)^m}  \equiv {q}/{\den(f)}  \mod V_f.$
\end{proposition}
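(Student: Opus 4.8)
The plan is to induct on $m$, peeling off one factor of $\den(f)$ at each step by a weighted integration by parts, in the spirit of the proof of \cite[Lemma 16]{BCCLX2013}. Write $D = \den(f)$ and $N = \num(f)$, so $f = N/D$ with $\gcd(N,D)=1$, and note $D^\prime \in F[t]$ since $t$ is a monomial over $F$. First I would dispose of the degenerate cases. If $\deg_t(D)=0$, then $D \in F \setminus \{0\}$ and $p/D^m = (p D^{-(m-1)})/D$ with $p D^{-(m-1)} \in F[t]$, so the claim holds with $q = p D^{-(m-1)}$; and if $m=1$ we simply take $q=p$. Hence from now on I would assume $\deg_t(D)>0$ and $m \ge 2$, with the statement already known for the exponent $m-1$.

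For the inductive step, the key observation is that $m-1$ is a \emph{positive} integer, so the weak normalization of $f$ gives $\gcd\!\left(N - (m-1)D^\prime,\, D\right)=1$. Thus $N-(m-1)D^\prime$ is invertible modulo $D$, and I would pick $c \in F[t]$ with $c\left(N-(m-1)D^\prime\right) \equiv p \pmod{D}$, say $c\left(N-(m-1)D^\prime\right) = p + Ds$ for some $s \in F[t]$. Setting $a = c/D^{m-1} \in F(t)$, a direct computation gives
\[
 a^\prime + a f \;=\; \frac{c^\prime D - (m-1) c D^\prime + c N}{D^m} \;=\; \frac{c^\prime D + c\left(N-(m-1)D^\prime\right)}{D^m} \;=\; \frac{p}{D^m} + \frac{c^\prime + s}{D^{m-1}}.
\]
Since $a^\prime + a f \in V_f$, this yields $p/D^m \equiv -\left(c^\prime+s\right)/D^{m-1} \pmod{V_f}$, and because $c^\prime + s \in F[t]$, applying the induction hypothesis to the polynomial $-(c^\prime+s)$ and the exponent $m-1$ completes the step and hence the induction.

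I expect no serious obstacle here: the computation is routine integration by parts, and $V_f$ is visibly an additive subgroup of $F(t)$, so the congruences compose. The one point that must be handled with care is the index bookkeeping: the weak normalization hypothesis is stated only for $i \in \bZ^+$, and the argument uses it precisely at $i = m-1$, which is positive exactly because we are in the case $m \ge 2$; this is why "weakly normalized" (rather than merely $\gcd(\num(f),\den(f))=1$) is the right assumption. The secondary point is to separate out the degenerate situation $\deg_t(\den(f))=0$, where the conclusion is immediate.
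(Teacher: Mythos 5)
Your proof is correct and follows essentially the same route as the paper's: both use the weak normalization of $f$ to solve a B\'ezout-type equation $c\bigl(\num(f)-(m-1)\den(f)^\prime\bigr) \equiv p \pmod{\den(f)}$, then take $a = c/\den(f)^{m-1}$ and integrate by parts to drop the exponent by one, closing with induction on $m$. Your explicit treatment of the case $\deg_t(\den(f))=0$ is harmless but not strictly needed, since the B\'ezout step is vacuous there and the paper's argument already covers it.
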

\begin{proof}  If $m=1$ or $p=0$, then set $q=p$. Otherwise, there exist $u, v \in F[t]$ such that
\[      u \left(\num(f)-(m-1)\den(f)^\prime\right) + v \den(f) = p \]
since $f$ is weakly normalized.
Using integration by parts, we have
\[
  	 \frac{p}{\den(f)^m}  = \left(\frac{u}{\den(f)^{m-1}}\right)^\prime +  \left(\frac{u}{\den(f)^{m-1}}\right) f + \frac{v-u^\prime}{\den(f)^{m-1}} 
	                                 \equiv  \frac{v-u^\prime}{\den(f)^{m-1}}  \mod V_f.
\]
The proposition then follows from a straightforward induction on $m$.   $\Box$
\end{proof}

The algorithm described in the above proof is specified below.

\medskip \noindent 
{\bf Algorithm GKR} (Generalized Kernel Reduction)

\smallskip \noindent
{\bf Input:} a monomial extension $F(t)$, a weakly normalized element $f \in F(t)$, a polynomial $p \in F[t]$ and a positive integer $m$

\smallskip \noindent
{\bf Output:} $a \in F(t)$ and $q \in F[t]$ such that
${p}/{\den(f)^m} = a^\prime + a f + {q}/{\den(f)}$

\begin{example}\label{EX:Kernel}
Let $F(t)$ and $f$ be given in Example \ref{EX:Shell} and 
$g=(y+1-xt)/{(t-x)^2}.$
Since $f$ is weakly normalized,  Algorithm {\bf GKR} yields $g=\left( {1}/(t-x) \right)'+  {f}/(t-x) \equiv 0 \mod V_f$.
\end{example}
\begin{example} 
Let $F=C(x,y)$ with $x'=1$ and $y'=xy$, and let $t=\exp(y)$. Then $t$ is a regular and hyperexponential monomial  over $F$. Let
$$f =\frac{1}{t+y} \quad \text{and} \quad  g=\frac{(y+1-x^2y)t-x^2y+y^2+x+y}{(t+y)^2}.$$
Note that $f$ is weakly normalized and $\den(g)=\den(f)^2$.
Algorithm {\bf GKR} yields $$g=\left(\frac{x}{t+y} \right)'+ \left(\frac{x}{t+y}\right)f +\frac {y}{t+y}\equiv\frac {y}{t+y} \mod V_f.$$
\end{example}

\subsection{A generalized kernel-shell reduction} \label{SUBSECT:unify}

We are ready to extend and unify shell and kernel reductions.
\begin{theorem}  \label{TH:sk}
Let $f, g \in F(t)$ and $f$ be weakly normalized.
Then the following assertions hold.
\begin{itemize}
\item[(i)] There exists a unique $t$-simple element $h$ with
$\den(h) \mid \den(g)$ and $\gcd(\den(h), \den(f))=1,$
and a $t$-reduced element $r$ such that
\begin{equation} \label{EQ:red}
      g  \equiv h + \frac{r}{\den(f)} \mod V_f.
\end{equation}
\item[(ii)] If $g  \equiv \tilde h + \tilde{r}/\den(f) \mod V_f,$
where $\tilde h \in F(t)$ and $\tilde r$ is $t$-reduced,
then $\den(h) \mid \den(\tilde h)$.
\item[(iii)] $g \in V_f$ if and only if $h=0$ and there exists a $t$-reduced element $a \in F(t)$ such that
$$ \frac{r}{\den(f)} = a^\prime + f a.$$
\item[(iv)] Assume further that  $\deg_t(t^\prime) \le 1$, $f \in F$, and that $g$ is normally $t$-proper. Then \eqref{EQ:red} can be rewritten as $g \equiv h \mod V_f$. 
Moreover, $g \in V_f$ if and only if $h=0$.
\end{itemize}
\end{theorem}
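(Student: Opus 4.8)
The plan is to reduce everything to the local decompositions already established. First I would prove existence in (i). Take the squarefree partial fraction decomposition of $g$, splitting off the polynomial part and the $t$-proper part, and within the $t$-proper part separate the components whose denominators are powers of normal irreducible polynomials from those involving special factors. For the special part, group all special contributions over a suitable power of $\den(f)$ (after possibly enlarging by units in $F$) and apply the generalized kernel reduction, Proposition \ref{PROP:kernel}, which is legitimate since $f$ is weakly normalized; this yields a term $q_1/\den(f)$ modulo $V_f$. For each normal component $q/p^m$ with $p\nmid\den(f)$, apply Lemma \ref{LM:localshell} to get $v/p + w/\den(f)$ modulo $V_f$ with $\deg_t(v)<\deg_t(p)$. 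Collecting the $v/p$ pieces gives a $t$-simple $h$ with $\den(h)\mid\den(g)$ and $\gcd(\den(h),\den(f))=1$, and collecting the remaining pieces $w/\den(f)$ together with $q_1/\den(f)$ and (if needed) a further kernel reduction to lower the power, gives $r/\den(f)$ with $r$ $t$-reduced, i.e.\ $\den(f)$ special. One subtlety to handle carefully: the components $q/p^m$ where $p$ is special but not a unit multiple of $\den(f)$'s factors — but since $f$ is weakly normalized its denominator's special part is unrestricted, and I would simply note that the special part of $\den(g)$ must, after the usual manipulation, be absorbed into powers of $\den(f)$ only when it shares factors; otherwise such a component is itself already $t$-reduced and goes straight into $r$. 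I expect this bookkeeping — precisely which special denominators end up in $r/\den(f)$ versus staying as separate $t$-reduced summands — to be the main obstacle, and the cleanest fix is to observe that any $t$-reduced rational function can be rewritten with denominator $\den(f)$ times a unit only up to the shared special part, so the statement of (i) should be read with $r$ $t$-reduced but not literally having denominator $\den(f)$; I would track this explicitly by keeping $r/\den(f)$ as shorthand for a $t$-reduced element.

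For uniqueness in (i) and for (ii), suppose $g\equiv h+r/\den(f)$ and $g\equiv\tilde h+\tilde r/\den(f)$ modulo $V_f$ with $h,\tilde h$ $t$-simple and $r,\tilde r$ $t$-reduced. Subtracting, $(h-\tilde h) + (r-\tilde r)/\den(f)\in V_f$. Now $h-\tilde h$ is again $t$-simple and $(r-\tilde r)/\den(f)$ is $t$-reduced (its denominator is special). If $\gcd(\den(h-\tilde h),\den(f))=1$ — which holds in the uniqueness claim where we also know $\gcd(\den(h),\den(f))=1$, and in (ii) I would first reduce to this case by noting any prime $p\mid\gcd(\den(\tilde h),\den(f))$ is both normal and a factor of the special $\den(f)$, hence a unit, contradiction — then Lemma \ref{LM:unique} applies directly with the $t$-simple part $h-\tilde h$ and the $t$-reduced part playing the role of $r/\den(f)$, forcing $h-\tilde h=0$. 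For (ii) this gives $\den(h)=\den(\tilde h)$ when $\tilde h$ is also forced $t$-simple; in general one argues that any $t$-simple summand of $\tilde h$'s normal part not cancelled must appear, so $\den(h)\mid\den(\tilde h)$. This part should be routine once the coprimality with $\den(f)$ is cleared.

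For (iii): if $h=0$ and $r/\den(f)=a'+fa$ with $a$ $t$-reduced, then $g\equiv r/\den(f)=a'+fa\in V_f$ trivially. Conversely, if $g\in V_f$, then by (i) $h+r/\den(f)\in V_f$; since $h$ is $t$-simple with denominator coprime to $\den(f)$ and $r/\den(f)$ is $t$-reduced, Lemma \ref{LM:unique} (applied to $h$ plus the $t$-reduced remainder) forces $h=0$, whence $r/\den(f)\in V_f$, i.e.\ $r/\den(f)=a'+fa$ for some $a\in F(t)$; that $a$ may be taken $t$-reduced follows from an argument on orders (if $\den(a)$ had a nontrivial normal factor $p$, then $\nu_p(a'+fa)<0$ with a normal $p\mid\den$, contradicting that $r/\den(f)$ is $t$-reduced). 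Finally (iv): under $\deg_t(t')\le 1$, $f\in F$, and $g$ normally $t$-proper, Corollary \ref{COR:shell} already gives a unique $t$-simple $h\in F[t]$ with $g\equiv h\bmod V_f$ and no $r/\den(f)$ term, so \eqref{EQ:red} collapses as claimed; and $g\in V_f\iff h=0$ is then immediate from (iii) since the $r$-term is absent (if $h\ne 0$ it is $t$-simple hence not in $F(t)'$, and $V_f$ with $f\in F$ still forbids it by an argument on orders combined with Lemma \ref{LM:unique}).
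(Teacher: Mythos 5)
Your proof of part (iii) has a genuine gap in the argument that $a$ may be taken $t$-reduced. You assert ``if $\den(a)$ had a nontrivial normal factor $p$, then $\nu_p(a'+fa)<0$,'' but this does not follow from an argument on orders alone. Writing $m=\nu_p(a)<0$, one has $\nu_p(a')=m-1$ and $\nu_p(fa)=\nu_p(f)+m$. If $\nu_p(f)=-1$, the two terms have \emph{equal} order $m-1$ at $p$, so their leading parts may cancel and you cannot conclude $\nu_p(a'+fa)<0$. The paper's proof treats exactly this: it first eliminates $\nu_p(f)\ge 0$ and $\nu_p(f)<-1$ by the order comparison you describe, then in the remaining case $\nu_p(f)=-1$ shows $\nu_p(f+a'/a)\ge 0$ and invokes the logarithmic-derivative identity together with Lemma~\ref{LM:residue} to conclude that $\num(f)+m\den(f)'$ and $\den(f)$ share a nontrivial factor with $m<0$, contradicting weak normalization of $f$. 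Your sketch never uses the weak-normalization hypothesis, a strong signal that it is incomplete. A related misstatement in the same passage: you appeal to ``$r/\den(f)$ is $t$-reduced,'' but it is $r$ that is $t$-reduced; since $\den(f)$ may carry normal factors, $r/\den(f)$ can have normal poles.

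Your treatment of (ii) also has problems. You subtract the two congruences and invoke Lemma~\ref{LM:unique} on $h-\tilde h$, but that lemma needs the $t$-simple part to have denominator coprime with $\den(f)$; neither $t$-simplicity of $\tilde h$ nor $\gcd(\den(\tilde h),\den(f))=1$ is given, and your proposed fix rests on the false claim that $\den(f)$ is special (weak normalization does not force that). The paper instead applies part~(i) to $\tilde h$ itself, obtaining $\tilde h\equiv h^*+r^*/\den(f)$ with $\den(h^*)\mid\den(\tilde h)$ and $\gcd(\den(h^*),\den(f))=1$, and then uses the uniqueness in (i) to get $h=h^*$; this sidesteps the coprimality obstruction. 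Finally, in (i) your case split omits the normal irreducible factors of $\den(g)$ that \emph{do} divide $\den(f)$: these are covered neither by Lemma~\ref{LM:localshell} (which requires $\gcd(p,\den(f))=1$) nor by your ``special part'' bucket. In the paper they are folded, together with any special factors of $\den(f)$, into $g_2=p/\den(f)^m$ and handled by Proposition~\ref{PROP:kernel}. Part~(iv) is fine.
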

\emph{Proof.}  By a partial fraction decomposition for $g$, we have
\begin{equation} \label{EQ:decomp}
g = g_1 + g_2 + g_3,
\end{equation}
where $g_1, g_2, g_3 \in F(t)$, $g_1$ and $g_2$ are $t$-proper, all the irreducible factors of $\den(g_1)$ are normal and coprime with $\den(f)$, those of
$\den(g_2)$ are factors of $\den(f)$, and those of $\den(g_3)$ are special and coprime with $\den(f)$.

(i) By Proposition \ref{PROP:shell},  there exists a $t$-simple element $h$ with $\den(h)|\den(g_1)$, and $q_1 \in F[t]$ such that
$g_1 \equiv h + q_1/\den(f) \mod V_f.$
Note that $g_2$ can be written as $p/\den(f)^m$ for some $p \in F[t]$ and $m \in \bZ^+$. By Proposition \ref{PROP:kernel}, there exists $q_2 \in F[t]$
such that $g_2 \equiv q_2/\den(f) \mod V_f.$
Since $g_3$ is $t$-reduced,  the above two congruences and~\eqref{EQ:decomp} lead to
$g \equiv h + r/\den(f)  \mod V_f,$
where $r = q_1 + q_2 + g_3 \den(f)$.
The uniqueness of $h$ is evident by Lemma \ref{LM:unique}.

(ii) By (i),  
$\tilde h \equiv h^* + r^*/\den(f) \mod V_f$
with $\den(h^*)|\den(\tilde h)$ and $\gcd(\den(h^*), \den(f))=1$ for some $t$-simple element $h^*$ and $t$-reduced element $r^*$.
  Therefore,   
$g \equiv h^* + (\tilde{r} + r^*)/\den(f) \mod V_f.$
Since $\tilde r + r^*$ is $t$-reduced, we have $h = h^*$ by (i). Consequently, $\den(h) \mid \den(\tilde h)$.

(iii) By \eqref{EQ:red},  $g \in V_f$ if $h=0$ and $r / \den(f) \in V_f$.
Conversely, assume that $g \in V_f$.   Then $h=0$ by~(ii). It follows from \eqref{EQ:red} that there exists $a \in F(t)$ such that 
\begin{equation} \label{EQ:reduced}
r  = a^\prime \den(f) + a \num(f).
\end{equation}
It remains to show that $a$ is $t$-reduced.
Suppose that $p$ is a nontrivial irreducible and normal polynomial with $m:=\nu_p(a)<0$.
Then $\nu_p(a^\prime)=m-1$. 
It follows from \eqref{EQ:reduced}, $\nu_p(r) \ge 0$ and an argument on orders that neither $\nu_p(f) \ge 0$ nor $\nu_p(f)<-1$. 
So $\nu_p(f)=-1$. Consequently, $\nu_p( a \den(f)) \le 0$ so that the order of $r/(a \den(f))$ at $p$ is nonnegative.
Hence, \eqref{EQ:reduced} implies $\nu_p(f + a^\prime/a) \ge 0$.
By the logarithmic derivative identity, there exist  $n_i \in \bZ$ and $q_i \in F[t]$ with $\gcd(p, q_i)=1$ such that
$$ f + \frac{a^\prime}{a} = f + m \frac{p^\prime}{p} + \sum_i n_i \frac{q_i^\prime}{q_i}, $$
which, together with $\nu_p\left(f + a^\prime/a\right) \ge 0$ and  $\nu_p\left(\sum_i n_i q_i^\prime/q_i\right) \ge 0$,
implies that $\nu_p\left(f + m p^\prime/p\right) \ge 0.$
Then  $\num(f) + m  \den(f)^\prime$ and  $\den(f)$ are not coprime by Lemma \ref{LM:residue}. Since $m$ is a negative integer,
$f$ is not weakly normalized, a contradiction.

(iv) If $\deg_t(t^\prime) \le 1$, $f \in F$ and $g$ is normally $t$-proper,
then both $g_2$ and $g_3$ in~\eqref{EQ:decomp} are equal to zero.  By Corollary \ref{COR:shell},  $g \equiv h \mod V_f$ for some $t$-simple $h \in F(t)$.
The other conclusion holds by (ii). $\Box$

Based on Theorem~\ref{TH:sk} and its proof, we present a generalized kernel-shell reduction.

\medskip \noindent
{\bf Algorithm GKSR} (Generalized Kernel-Shell Reduction)

\smallskip \noindent
{\bf Input:}  a monomial extension $F(t)$, a weakly normalized element $f \in F(t)$ and $g \in F(t)$

\smallskip \noindent
{\bf Output:} $a, h, r \in F(t)$ with $h$ being $t$-simple and $r$ being $t$-reduced such that
 $$ \den(h)|\den(g),  \quad \gcd(\den(h), \den(f))=1 \quad \text{and} \quad g=a^\prime + a f  + h + \frac{r}{\den(f)}$$
\begin{enumerate}
\item[(1)] use a partial fraction decomposition to compute $g_1, g_2, g_3 \in F(t)$ such that
$$g = g_1 + g_2 + g_3,$$
where $g_1$ is normally $t$-proper with $\gcd(\den(g_1), \den(f)){=}1$, $g_2$ is $t$-proper, every irreducible factor of $\den(g_2)$ divides $\den(f)$, and $g_3$ is $t$-reduced
\item[(2)]  $(a_1, h, r_1) \leftarrow$ Algorithm {\bf GSR}$(F(t), f, g_1)$
\item[(3)]  find $p \in F[t]$ and the minimal $m \in \bZ^+$ such that $g_2=p/\den(f)^m$, \\ and
            $(a_2, q)$ $\leftarrow$ Algorithm {\bf GKR}$(F(t), f, p, m)$
\item[(4)]  $(a,r) \leftarrow \left(a_1+a_2, \, r_1+q + g_3 \den(f) \right)$
\item[(5)] {\bf  return} $a, h, r$
\end{enumerate}

\noindent
The correctness of Algorithm {\bf GKSR} is immediate from Propositions~\ref{PROP:shell}, \ref{PROP:kernel} and Theorem \ref{TH:sk} (i).
The $t$-reduced element $r$ is zero if  $\deg_t(t^\prime) \le 1$, $f \in F$ and $g$ is normally $t$-proper by Theorem \ref{TH:sk} (iv). 


Let $y$ be hyperexponential over $F(t)$. By Algorithm {\bf GKS}, we compute the weakly normalized kernel $\xi$ and the corresponding shell $\eta$ of $y^\prime/y$ in $F(t)$.
Set $z=y/\eta$. Then $z^\prime/z = \xi$.  Let us reduce $g z$ for an element $g \in F(t)$.
By Algorithm {\bf GKSR}, we compute an element $u \in F(t)$, a $t$-simple element~$h$ and a $t$-reduced element $r$ such that
$ g = u^\prime + u \xi + h + {r}/{\den(\xi)}.$
It follows that
\[ g z = (u z)^\prime + \left(h + \frac{r}{\den(\xi)}\right) z. \] 
This is a reduction  for $gz$ for all $g \in F(t)$. In particular, setting $g=\eta$ yields a reduction for $y$.

\begin{corollary} \label{COR:add}
Let $\deg_t(t^\prime) \le 1$,  $y$ be a regular and hyperexponential monomial over $F(t)$ with~$y^\prime/y \in F$, and $g \in F(t)$ be nonzero and normally $t$-proper.
Then there exists a unique $t$-simple element $h \in F(t)$ such that $g y = (uy)^\prime + h y$, which is an additive decomposition of $gy$. 
\end{corollary}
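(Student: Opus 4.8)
The plan is to apply Theorem~\ref{TH:sk}(iv) to the logarithmic derivative $f := y^\prime/y$ and to $g$, using the hypotheses $\deg_t(t^\prime) \le 1$, $f \in F$, and that $g$ is normally $t$-proper. First I would note that since $y$ is a regular hyperexponential monomial over $F(t)$ with $y^\prime/y \in F$, the element $f = y^\prime/y$ lies in $F$, hence is trivially normal and special, so in particular $f$ is weakly normalized, and $\den(f) = 1$; thus Theorem~\ref{TH:sk} is applicable to the pair $(f, g)$. Invoking part~(iv), there is a unique $t$-simple $h \in F(t)$ with $g \equiv h \bmod V_f$, i.e.\ $g = a^\prime + a f + h$ for some $a \in F(t)$. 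Translating back to hyperexponential functions via the correspondence $a^\prime + af \leftrightarrow (ay)^\prime$ recalled at the start of Section~\ref{SECT:sk} (using $(ay)^\prime = (a^\prime + af)y$), this reads $gy = (ay)^\prime + hy$; setting $u := a$ gives the claimed identity.

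Next I would establish that this is genuinely an \emph{additive decomposition} of $gy$ in $F(t,y)$, i.e.\ that $h = 0$ if and only if $gy \in (F(t,y))^\prime$. One direction is trivial: if $h = 0$ then $gy = (uy)^\prime \in (F(t,y))^\prime$. For the converse I would use Proposition~\ref{PROP:similar}: since $y$ is a regular hyperexponential monomial, $gy \in (F(t,y))^\prime$ implies $gy = (by)^\prime$ for some $b \in F(t)$, hence $g = b^\prime + bf \in V_f$. By Theorem~\ref{TH:sk}(iv), $g \in V_f$ forces $h = 0$. So $h = 0 \iff gy \in (F(t,y))^\prime$, which is exactly the defining property of an additive decomposition (with remainder $hy$). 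Uniqueness of $h$ is precisely the uniqueness assertion in Theorem~\ref{TH:sk}(iv).

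The main subtlety — really the only place one must be slightly careful — is the hypothesis-checking that lets us quote Theorem~\ref{TH:sk}(iv): we need $f \in F$ (not merely $f \in F(t)$), which is given as $y^\prime/y \in F$; we need $\deg_t(t^\prime) \le 1$, which is a hypothesis; and we need $g$ normally $t$-proper, also a hypothesis. One should also observe that $g \ne 0$ together with $g$ normally $t$-proper means $\den(g)$ has positive $t$-degree, so $h \ne 0$ is the generic situation, but nothing in the proof requires this. I do not anticipate a genuine obstacle here: the statement is essentially a repackaging of Theorem~\ref{TH:sk}(iv) plus Proposition~\ref{PROP:similar}, with the dictionary between $V_f$-congruences and derivatives of hyperexponential functions. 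The one thing worth spelling out explicitly in the write-up is why $f \in F$ guarantees $f$ is weakly normalized (because nonzero elements of $F$ are trivially normal and special, so $\den(f)=1$ and the gcd condition is vacuous), since Theorem~\ref{TH:sk} is stated for weakly normalized $f$.
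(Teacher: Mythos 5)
Your proof is correct and follows essentially the same route as the paper's: invoke Theorem~\ref{TH:sk}(iv) to obtain the unique $t$-simple $h$ with $g \equiv h \bmod V_f$, then use Proposition~\ref{PROP:similar} (applied over $F(t)$ in place of $F$) to show $gy \in F(t,y)^\prime$ forces $g \in V_f$ and hence $h=0$. The only cosmetic difference is that the paper opens by observing the kernel and shell of $y^\prime/y$ in $F(t)$ are $y^\prime/y$ and $1$, whereas you note directly that $f\in F$ is trivially weakly normalized with $\den(f)=1$; both justifications serve the same purpose of making Theorem~\ref{TH:sk}(iv) applicable.
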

\begin{proof}
Since $y^\prime / y \in F$, its kernel and shell in $F(t)$ are $y^\prime/y$ and $1$, respectively.
There exists a unique $t$-simple element $h \in F(t)$ such that $g y = (uy)^\prime + h y$ by Theorem \ref{TH:sk} (iv). 
Assume $g y \in F(t,y)^\prime$.
By Proposition~\ref{PROP:similar}, there exists $v \in F$ such that $g y = (vy)^\prime$.
Consequently, $g \in V_f$, which, together with Theorem \ref{TH:sk} (iv),  implies
that $h = 0$. $\Box$
\end{proof}

\begin{example}\label{EX:ShellandKernel}
Let $F(t)$ and $f$ be given by Example \ref{EX:Shell}. Consider
$$g=\frac{-xt^3+(y-x+1)t^2+(2y-2x^2-x+2)t+x^3+y+1}{(1+t)^2(t-x)^2} \in F(t)$$
Since $f$ is weakly normalized, its kernel is $f$ and the shell is $1$.
First,  we decompose $g$ as
\[g = \underbrace{\frac{x}{(1+t)^2}}_{g_1}+\underbrace{\frac{y+1-xt}{(t-x)^2}}_{g_2},\]
where $g_1$ is normally $t$-proper with $\gcd(\den(g_1), \den(f))=1$, the  irreducible factor $t-x$ of  $\den(g_2)$ divides $\den(f)$.
By Algorithm {\bf GKSR}, Examples \ref{EX:Shell} and \ref{EX:Kernel}, 
\begin{align*}
g & =  \left( \frac{1}{t-x}+ \frac{1}{1+t}\right)'+\left( \frac{1}{t-x}+ \frac{1}{1+t}\right)f+\frac{x^2+x+y}{(x+1)(1+t)}+\frac {-y}{(x+1)(t-x)} \\
  & \equiv   \underbrace{\frac{x^2+x+y}{(x+1)(1+t)}}_h+ \frac {-y}{(x+1)(t-x)} \mod V_f.
\end{align*}
In other words,
\[g \underbrace{\exp\left(\int f \right)}_z  = \left(  \left( \frac{1}{t-x}+ \frac{1}{1+t} \right) z  \right)^\prime + \left(h + \frac {-y}{(x+1)(t-x)} \right) z.
\]
 Since $h \neq 0$, we have that $gz \notin F(t,z)^\prime$ by Theorem \ref{TH:sk} (ii).
\end{example}

\section{An additive decomposition in rationally hyperexponential towers} \label{SECT:rh}

This section has four parts.  In Section \ref{SUBSECT:lm}, we present a variant of the Matryoshka decomposition in \cite{DGLW2020}.
An algorithm is developed for computing additive decompositions in rationally hyperexponential towers in Section \ref{SUBSECT:rh}. 
We describe the projections of logarithmic derivatives in terms of residues, and present a criterion on elementary integrability over such towers in Sections \ref{SUBSECT:logder} and \ref{SUBSECT:elem}, respectively.
 \subsection{Laurent-Matryoshka decompositions} \label{SUBSECT:lm}
 For $n \in \bZ^+$, we denote $\{1,2,\ldots,n\}$ and $\{0,1,2,\ldots,n\}$ by $[n]$ and $[n]_0$, respectively. Let $F_0$ be a  field. For every $i \in [n]$, we further let $F_i=F_{i-1}(t_i)$, where $t_i$ is transcendental over $F_{i-1}$. 
 Then there is a chain of field extensions:
 \begin{equation} \label{EQ:tower}
\begin{array}{ccccccc}
F_0 & \subset & F_1 & \subset & \cdots & \subset & F_n \\
&  & \shortparallel &  & &   & \shortparallel \\
& & F_0(t_1) & \subset & \cdots & \subset & F_{n-1}(t_n).
\end{array}
\end{equation}
For each $i\in [n]$, $f\in F_n$ is said to be \emph{$t_i$-proper} if $f \in F_i$ and $\deg_{t_i}(\num(f))<\deg_{t_i}(\den(f))$.
By a power product of $t_1, \ldots, t_n$, we mean the product $t_1^{\ell_1} \cdots t_n^{\ell_n}$, where the $\ell_i$'s are integers. 
For all~$i \in [n-1]_0$, we denote by $\bT_i$ the set of power products of $t_{i+1}, \ldots, t_n$, and set $\bT_n=\{1\}$.

In the rest of this paper, we let $F_0$ be a  differential field.
Assume that each generator $t_i$ is regular and hyperexponential over $F_{i-1}$. We call \eqref{EQ:tower} a {\em hyperexponential tower}.
For $i \in [n]$,  an element $f$ of~$F_n$  is {\em $t_i$-simple} if it is $t_i$-proper and $\den(f)$ is normal as an element of $F_{i-1}[t_i]$, and
it is {\em $t_i$-reduced} if it belongs to $F_{i-1}[t_i^{-1}, t_i]$ (see Example \ref{EX:ns}). Similarly, $f$ is normally $t_i$-proper if $f$ is $t_i$-proper and $t_i$ does not divide $\den(f)$ in $F_{i-1}[t_i]$.
Zero is normally $t_i$-proper for all $i \in [n]$.
An element of $F_i$ can be written uniquely as the sum of a normally $t_i$-proper element and an element of $F_{i-1}[t_i^{-1}, t_i]$.

Let $L_n$ be the additive subgroup consisting of all normally $t_n$-proper elements in $F_n$. For $i \in [n-1]$, let $L_i$ be the additive group generated by
elements of the form  $a T$, where $a \in F_i$ is normally $t_i$-proper and $T \in \bT_i$.
Moreover, let $L_0$ be the ring of Laurent polynomials in $t_1, \ldots, t_n$ over $F_0$.
Then $F_n = L_0 \oplus L_1 \oplus \cdots \oplus L_n$ 
by a straightforward verification.   
Let $\pi_i$ be the projection from $F_n$ onto $L_i$ with respect to the above direct sum for all $i\in [n]_0$.
For $f\in F_n$, 
$f = \pi_0(f) + \pi_1(f) + \cdots + \pi_n(f)$ 
is called the \emph{Laurent-Matryoshka decomposition}  of $f$.

\begin{example}
  Let $F_0=\bQ(x)$. A Laurent-Matryoshka decomposition in $F_3$ is 
$$\underbrace{\frac{t_2t_3(x-t_3)}{t_1(t_2+1)(t_3-1)}}_f = \underbrace{-t_3t_1^{-1}+(x-1)t_1^{-1}}_{\pi_0(f)} \, + \, \underbrace{0}_{\pi_1(f)} \, + \, \underbrace{\frac{t_3}{t_1(t_2+1)}- \frac{x-1}{t_1(t_2+1)}}_{\pi_2(f)} \, + \, \underbrace{\frac{(x-1)t_2}{t_1(t_2+1)(t_3-1)}}_{\pi_{3}(f)}.$$
\end{example}

\subsection{Rationally hyperexponential towers} \label{SUBSECT:rh}

Rationally hyperexponential towers are hyperexponential towers of a special type. They allow us to
apply the Hermite reduction in~\cite{BCCLX2013} and additive decomposition  in Corollary \ref{COR:add} directly.
\begin{definition}\label{def:flat}
The tower $F_n$ in \eqref{EQ:tower} is said to be \emph{rationally hyperexponential} if  $t_i^\prime/t_i \in F_0$ for every $i \in [n]$, $(F_0, \, ^\prime)=(C(x),d/dx)$ and $C_{F_n}=C$.
\end{definition}
\begin{lemma} \label{LM:der}
Let the tower $F_n$ in \eqref{EQ:tower} be rationally hyperexponential and $g$ belong to $F_n$.
If 
\begin{equation} \label{EQ:form}
g = \sum_{i \in [n]_0} \underbrace{\sum_{T \in \bT_i} g_T T}_{\pi_i(g)},
\end{equation}
where the coefficient $g_T$ in $\pi_0(g)$ belongs to $F_0$ and $g_{T}$ in $\pi_i(g)$ with $i \in [n]$ is normally $t_i$-proper,
then  
\begin{equation} \label{EQ:dform}
g^\prime = \sum_{i \in [n]_0}  \underbrace{ \sum_{T \in \bT_i}\left(g_T^\prime +\frac{T^\prime}{T} g_T\right) T}_{\pi_i\left( g^\prime\right)}.
\end{equation}
Consequently, $\pi_i(g)^\prime = \pi_i(g^\prime)$ for all $i \in [n]_0$. 
\end{lemma}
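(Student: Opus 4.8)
The strategy is to verify \eqref{EQ:dform} by differentiating the expansion \eqref{EQ:form} term by term, using the product rule, and then to argue that the resulting expression is already the Laurent--Matryoshka decomposition of $g^\prime$, i.e.\ that each inner sum $\sum_{T \in \bT_i}\left(g_T^\prime + (T^\prime/T) g_T\right) T$ lies in $L_i$. The first half is a direct computation: for a single summand $g_T T$ with $T \in \bT_i$ a power product of $t_{i+1}, \ldots, t_n$, the product rule gives $(g_T T)^\prime = g_T^\prime T + g_T T^\prime = \left(g_T^\prime + (T^\prime/T) g_T\right) T$, where $T^\prime/T = \sum_{j>i} \ell_j t_j^\prime/t_j$ makes sense because each $t_j$ is hyperexponential. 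Summing over all $T$ and all $i$ yields the displayed formula for $g^\prime$, so the content of the lemma is really the claim that this rearrangement respects the direct sum $F_n = L_0 \oplus \cdots \oplus L_n$.

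The key point, then, is to check that differentiation does not move a term from one summand $L_i$ of the direct sum into another. For $i \in [n]$, fix $T \in \bT_i$; by hypothesis $g_T \in F_i$ is normally $t_i$-proper, meaning $g_T$ is $t_i$-proper and $t_i \nmid \den(g_T)$ in $F_{i-1}[t_i]$. I must show $g_T^\prime + (T^\prime/T) g_T$ is again normally $t_i$-proper and an element of $F_i$. Since $F_n$ is rationally hyperexponential, $t_j^\prime/t_j \in F_0 \subset F_i$ for every $j$, so $T^\prime/T \in F_0$; hence $(T^\prime/T) g_T \in F_i$ is normally $t_i$-proper (multiplying by a nonzero element of $F_0$ scales $\num$ and leaves $\den$ unchanged up to a unit, preserving both $t_i$-properness and the non-divisibility by $t_i$). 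For $g_T^\prime$: writing $g_T = a/b$ with $a, b \in F_{i-1}[t_i]$, $\deg_{t_i} a < \deg_{t_i} b$, $t_i \nmid b$, one has $g_T^\prime = (a^\prime b - a b^\prime)/b^2$; here the derivative of any element of $F_{i-1}[t_i]$ is again in $F_{i-1}[t_i]$ because $t_i$ is a monomial over $F_{i-1}$ with $t_i^\prime/t_i \in F_0$, so $t_i^\prime \in F_0[t_i] \subseteq F_{i-1}[t_i]$ and $\deg_{t_i}(t_i^\prime) \le 1$. Then an argument on orders (or a direct degree count) shows $g_T^\prime$ is still $t_i$-proper, and since $t_i \nmid b$ we get $t_i \nmid b^2$, so $t_i \nmid \den(g_T^\prime)$ as well; hence $g_T^\prime$ is normally $t_i$-proper. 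Thus $g_T^\prime + (T^\prime/T) g_T \in L_i$ after multiplying by $T$ and summing. The case $i=0$ is even easier: $g_T \in F_0 = C(x)$ and $T^\prime/T \in F_0$, so $g_T^\prime + (T^\prime/T) g_T \in F_0$, and $\sum_{T \in \bT_0}\left(g_T^\prime + (T^\prime/T) g_T\right) T$ is a Laurent polynomial in $t_1, \ldots, t_n$ over $F_0$, i.e.\ lies in $L_0$.

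With \eqref{EQ:dform} established, the identity $\pi_i(g)^\prime = \pi_i(g^\prime)$ for all $i \in [n]_0$ is immediate: by the computation above $\pi_i(g)^\prime = \left(\sum_{T\in\bT_i} g_T T\right)^\prime = \sum_{T\in\bT_i}\left(g_T^\prime + (T^\prime/T)g_T\right)T$, and we have just shown this equals $\pi_i(g^\prime)$ because the right-hand side of \eqref{EQ:dform} is, term by term, the Laurent--Matryoshka decomposition of $g^\prime$ and the $\pi_i$ are the corresponding projections. The only mild subtlety to be careful about is the bookkeeping that multiplying a normally $t_i$-proper element by the scalar $T^\prime/T \in F_0$ keeps it normally $t_i$-proper (which fails if one multiplied by something involving $t_i$), so I expect the main place to state things precisely is that closure claim together with the verification that $g_T^\prime$ stays normally $t_i$-proper; everything else is the routine product rule and the uniqueness of the Laurent--Matryoshka decomposition.
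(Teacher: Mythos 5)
Your proof is correct and follows the same route as the paper: differentiate termwise via the product rule, observe $T^\prime/T \in F_0$, and check that the resulting coefficients stay in $F_0$ (for $i=0$) or remain normally $t_i$-proper (for $i\in[n]$), so that \eqref{EQ:dform} is already the Laurent--Matryoshka decomposition of $g^\prime$. You simply spell out in more detail the verification that $g_T^\prime$ and $(T^\prime/T)g_T$ remain normally $t_i$-proper, which the paper asserts without elaboration.
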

\begin{proof} Since $T^\prime/T \in F_0$,
we have that $g_T^\prime + \left(T^\prime/T\right) g_T$ belongs to $F_0$ if $g_T$ is a coefficient in $\pi_0(g)$,  and it is normally $t_i$-proper if $g_T$ is a coefficient in $\pi_i(g)$ for $i \in [n]$.
Therefore,  the lemma follows from the identity that $\left(g_T T \right)^\prime = \left(g_T^\prime + \left(T^\prime/T \right) g_T \right)T$. $\Box$
\end{proof}

We need some notation to describe remainders in a rationally hyperexponential tower $F_n$ in \eqref{EQ:tower}. For~$i \in [n]$, set 
$R_i$ to be the additive group generated by $\{ h T \mid \text{$h \in F_i$ is $t_i$-simple and $T \in \bT_i$} \}.$
For~$T \in \bT_0$, we let $\xi_T$ be the normalized kernel and $\eta_T$ the corresponding shell of $T^\prime/T$ in $F_0$. 
Set $R_0$ to be the additive group generated by 
$$\left\{ r \left( \eta_T^{-1} T \right) \mid  \text{$T \in \bT_0 \setminus \{1\}$ and $r$ is a residual form w.r.t.\ $\xi_T$} \right\} \cup \left\{ s \mid \text{$s \in F_0$ is $x$-simple} \right\}.
$$
Finally, we let $R= \sum_{i \in [n]_0} R_i$, which is a direct sum by the observation that $R_i \subset L_i$ for all $i \in [n]_0$.
\begin{theorem} \label{TH:add}
With the notation just introduced, we have $F_n = F_n^\prime \oplus R.$
\end{theorem}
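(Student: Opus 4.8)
The plan is to prove the two requirements of a direct-sum decomposition separately: first that $F_n = F_n^\prime + R$ (every element of $F_n$ is congruent modulo $F_n^\prime$ to an element of $R$), and then that $F_n^\prime \cap R = \{0\}$ (the remainder is genuinely a ``normal form'', so the sum is direct). Throughout I would work with the Laurent-Matryoshka decomposition $f = \sum_{i\in[n]_0}\pi_i(f)$ and exploit Lemma~\ref{LM:der}, which tells us that differentiation respects the direct sum $F_n = L_0\oplus\cdots\oplus L_n$ in a rationally hyperexponential tower. This reduces both halves of the proof to working inside each summand $L_i$ one at a time, and inside $L_i$ further to working coefficient-by-coefficient on the power products $T\in\bT_i$.

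For the surjectivity half, fix $f\in F_n$ and handle each $\pi_i(f)$ in turn. For $i\in[n]$, each term $g_T T$ of $\pi_i(f)$ has $g_T\in F_i$ normally $t_i$-proper and $T\in\bT_i$; since $t_i$ is regular and hyperexponential with $t_i^\prime/t_i\in F_0$, I can write $g_T T = (a T)^\prime + (\text{something})\cdot T$ by applying Corollary~\ref{COR:add} (or Theorem~\ref{TH:sk}(iv), which applies because $\deg_{t_i}(t_i^\prime)\le 1$ and $T^\prime/T\in F_0\subset F_{i-1}$) with $y$ playing the role of $T$ and $F_{i-1}$ the role of $F$; this replaces $g_T T$ modulo $F_n^\prime$ by $h_T T$ with $h_T$ $t_i$-simple, i.e.\ by an element of $R_i$. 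For $i=0$ the term $\pi_0(f)$ is a Laurent polynomial $\sum_{T\in\bT_0} g_T T$ with $g_T\in F_0 = C(x)$; for $T=1$ we apply Hermite--Ostrogradsky to reduce $g_1\in C(x)$ to an $x$-simple remainder modulo $C(x)^\prime$, and for $T\ne 1$ we regard $g_T T$ as a hyperexponential element over $C(x)$ with logarithmic derivative data $(\xi_T,\eta_T)$ and apply the Hermite reduction for hyperexponential functions of \cite{BCCLX2013} (the shell-plus-kernel-plus-projection procedure recalled in Section~\ref{SUBSECT:rrh}), which replaces it modulo $C(x)(T)^\prime$—hence modulo $F_n^\prime$—by a term $r(\eta_T^{-1}T)$ with $r$ a residual form with respect to $\xi_T$. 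Summing over $i$ and over $T$ gives $f\equiv (\text{element of }R)\bmod F_n^\prime$.

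For the directness half, suppose $0\ne \rho\in F_n^\prime\cap R$, say $\rho = g^\prime$ with $g\in F_n$ and $\rho = \sum_{i\in[n]_0}\rho_i$ with $\rho_i\in R_i$. By Lemma~\ref{LM:der}, $\rho_i = \pi_i(\rho) = \pi_i(g)^\prime = \pi_i(g^\prime)$, so it suffices to derive a contradiction from $0\ne\rho_i\in R_i\cap L_i^{\,\prime'}$ where $L_i^{\,\prime'}$ denotes the image of $\pi_i(g)$-type elements under differentiation—more precisely, to show that a nonzero element of $R_i$ cannot be the derivative of an element of $F_n$ of the right shape. For $i\in[n]$, write $\rho_i = \sum_T h_T T$ and use the coefficientwise formula \eqref{EQ:dform}: if $\rho_i = g_i^\prime$ with $g_i\in L_i$, $g_i = \sum_T c_T T$, then $h_T = c_T^\prime + (T^\prime/T)c_T$ for each $T$, i.e.\ $h_T\in V_{T^\prime/T}$ in the sense of Section~\ref{SECT:sk}. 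But $h_T$ is $t_i$-simple and nonzero for some $T$, contradicting Theorem~\ref{TH:sk}(iv) (or Lemma~\ref{LM:unique} together with the argument-on-orders fact), since a nonzero $t_i$-simple element cannot lie in $V_{T^\prime/T}$. For $i=0$, a nonzero element of $R_0$ is a sum over $T\in\bT_0$ of terms $r_T(\eta_T^{-1}T)$ with $r_T$ a residual form plus an $x$-simple term $s$; the coefficient of each fixed power product $T$ must separately be a derivative (of the corresponding coefficient of $g_0$), so we reduce to: an $x$-simple $s\ne 0$ is not in $C(x)^\prime$ (argument on orders), and a nonzero residual form $r$ gives $r(\eta_T^{-1}T)\notin C(x)(T)^\prime$ by \cite[Lemma 11]{BCCLX2013}. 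Hence $\rho_i=0$ for all $i$, so $\rho=0$.

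The main obstacle I anticipate is the bookkeeping in the surjectivity step for $i\in[n]$: one must apply Corollary~\ref{COR:add}/Theorem~\ref{TH:sk}(iv) with the base field $F_{i-1}$, but the coefficients $g_T$ produced live in $F_i\subset F_n$ and one needs to be sure that the reduction done at ``level $i$'' does not disturb the already-achieved normal forms at other levels—this is exactly what Lemma~\ref{LM:der}'s statement $\pi_i(g)^\prime=\pi_i(g^\prime)$ guarantees, since the auxiliary term $(aT)^\prime$ we subtract again lies in $L_i$ and its $\pi_j$-components for $j\ne i$ vanish. A secondary subtlety is that the $t_i$-simple remainders $h_T$ obtained at level $i$ may themselves fail to be normally $t_{i+1},\dots$-proper when re-expanded, so one should either process the levels in the order $i=n,n-1,\dots,1,0$ or invoke the direct-sum structure to see that each $h_T T$ already sits in $L_i$ by construction; I would adopt the descending order to keep the argument clean. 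Everything else is a routine assembly of the cited reductions.
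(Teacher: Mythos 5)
Your proposal is correct and follows essentially the same route as the paper's own proof: the surjectivity half reduces level-by-level via Corollary~\ref{COR:add} for $i\in[n]$ and the Hermite-Ostrogradsky plus shell/kernel/polynomial reductions of \cite{BCCLX2013} for $i=0$, while the directness half uses Lemma~\ref{LM:der} to split the equation $r=g^\prime$ coefficient-by-coefficient along each power product $T$ and then kills each coefficient by an argument on orders (for $i\in[n]$) or \cite[Lemma 11]{BCCLX2013} (for $i=0$, $T\ne1$). The ``secondary subtlety'' you raise about processing order is not actually an obstacle --- as you yourself observe, each remainder $h_T T$ already lies in $L_i$ by construction, and the surjectivity argument treats the summands $L_i$ independently --- so no ordering is needed.
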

\begin{proof} 
First, we show that $F_n = F_n^\prime + R$. It suffices to show that $f T \in F_n^\prime + R_i$ for every  normally $t_i$-proper element $f$, $T \in \bT_i$ and $i \in [n]$,
and that $f T \in F_n^\prime  + R_0$ for all $f \in F_0$ and $T \in \bT_0$.

Let $i \in [n]$. By Corollary \ref{COR:add}, there exists a $t_i$-simple $h$ such that $f T \equiv h T \mod F_n^\prime$, where $T$ is regarded as a hyperexponential element over $F_i$. 
Thus, $f T \in F_n^\prime + R_i$ by the definition of $R_i$. 

We regard each $T \in \bT_0 \setminus \{1\}$ as a hyperexponential element over $F_0$.
Let $\xi_T$ and $\eta_T$ be the normalized kernel and shell of $T^\prime/T$ in $F_0$, respectively. For $f \in F_0$,
a partial fraction decomposition for $f \eta_T$ yields $f \eta_T  = a + b$
with $\gcd(\den(a), \den(\xi_T))=1$ and $\den(b) \mid \den(\xi_T)^m$ for some $m \in \bN$.
Then there exists an $x$-simple element $h \in C(x)$ and $u, v \in C[x]$ such that
\[ a \left(\eta_T^{-1} T\right)    \equiv \left( h + \frac{u}{\den(\xi_T)} \right)   \left(\eta_T^{-1} T\right) \mod F_n^\prime \quad 
\text{and} \quad b \left(\eta_T^{-1} T\right)    \equiv \frac{v}{\den(\xi_T)}\left(\eta_T^{-1} T\right)  \mod F_n^\prime \]
by the shell and kernel reductions in \cite{BCCLX2013}, respectively. It follows that
\[  f T  \equiv \left( h + \frac{u+v}{\den(\xi_T)} \right) \left(\eta_T^{-1} T\right)  \mod F_n^\prime. \]
The polynomial reduction in \cite{BCCLX2013} finds a polynomial $p \in C[x]$ such that
\[  f T \equiv \underbrace{\left( h + \frac{p}{\den(\xi_T)} \right)}_r  \left(\eta_T^{-1} T\right) \mod F_n^\prime, \]
where $r$ is a residual form with respect to $\xi_T$. 
So  $f T \in F_n^\prime+R_0$ for all $f \in F_0$ and $T \in \bT_0 \setminus \{1\}$. 
In addition,  there exists an $x$-simple element $s \in F_0$ such that $f \equiv s \mod F_n^\prime$ by the Hermite-Ostrogradsky reduction. 
Therefore, $f T \in F_n^\prime + R_0$ for all $f \in F_0$ and $T \in \bT_0$. Consequently, $F_n = F_n^\prime +R$.

It remains to show that $F_n^\prime \cap R = \{0\}$. 
For $f \in F_n^\prime \cap R$, there exists $g \in F_n$ such that $f = g^\prime$.
Let the Laurent-Matryoshka decomposition of $g$ be given in \eqref{EQ:form}. 
Then the  Laurent-Matryoshka decomposition of $f$ is given in \eqref{EQ:dform}
by Lemma~\ref{LM:der}.
On the other hand, $f \in R$ implies that for all $i \in [n]$,  $$\pi_i(f) = \sum_{T \in \bT_i} r_T T,$$
where $r_T$ is $t_i$-simple. It follows that $r_T = g_T^\prime + (T^\prime/T) g_T$
for all $T \in \bT_i$ and $i \in [n]$. Consequently, $g_T=r_T = 0$ by an argument on orders, that is, $g \in L_0$ and $f \in R_0$ with $f = g^\prime$.
For $T \in \bT_0 \setminus \{1\}$, 
\[ \sum_{T \in \bT_0} r_T \left(\eta_T^{-1} T\right)  = \sum_{T \in \bT_0} \left(g_T^\prime + \frac{T^\prime}{T} g_T \right) T, \]
where $\xi_T$ and $\eta_T$ are the  same as above, $r_T$ is a residual form with respect to $\xi_T$, and $g_T \in F_0$. 
So~$r_T\left(\eta_T^{-1} T\right)  = (g_T T)^\prime$.
By \cite[Lemma 11]{BCCLX2013}, $r_T=g_T=0$. 
Accordingly, $f, g \in F_0$ and $f = g^\prime$. We have that $f$ is $x$-simple by $f \in R_0$. Hence, $f=0$, that is,  $F_n^\prime + R$ is a direct sum.  $\Box$
\end{proof}
By the above theorem, for every element $f \in F_n$, there exists $g \in F_n$ and a unique $r \in R$ such that $f = g^\prime + r$.
In other words, $g^\prime + r$ is an additive decomposition of $f$. Moreover, the remainder $r$ is unique due to the direct sum $F_n^\prime \oplus R$.

\medskip \noindent 
{\bf Algorithm AD\_RHT} (Additive Decomposition in a Rationally Hyperexponential Tower)

\smallskip \noindent
\text{\bf Input:} a rationally hyperexponential tower $F_n=C(x)(t_1,t_2,\ldots,t_n)$ and $f\in F_n$ with $f \neq 0$

\smallskip \noindent
\text{\bf Output:}  $g,r \in F_n$ such that $g^\prime + r$ is an additive decomposition of  $f$

\begin{itemize}
  \item[(1)]  $(g, r, 0) \leftarrow$ Algorithm {\bf GKSR}$(F_n, 0, \pi_n(f))$
  \item[(2)]{\bf for} $i$ {\bf from} 1 {\bf to} $n-1$  {\bf do}
\begin{itemize}
 \setlength\leftskip{0.5em} \item[(2.1)]   write $\pi_i(f)=\sum_{j \in J}  a_j T_j,$ where $a_j \in F_i \setminus \{0\}$ and $T_j \in \bT_i$
\item[(2.2)] {\bf for} each $j \in J$ {\bf do}
  \quad \begin{itemize}
   \item[] $(u_j, v_j, 0)$ $\leftarrow$ Algorithm {\bf GKSR}$(F_i, T_j^\prime/T_j, a_j)$ and $(g, r)$ $\leftarrow$ $(g+u_j T_j, r+v_j T_j)$
  \end{itemize}
  {\bf end do}
\end{itemize}
\item[] {\bf end do}
\item[(3)] write
          $\pi_0(f)=s + \sum_{j \in J} a_j T_j,$
          where $s, a_j \in C(x)$ with $a_j \neq 0$,  and $T_j \in \bT_0 \setminus \{1\}$
\item[(4)]  find $u, v \in C(x)$ such that $s = u^\prime + v$ with $v$ being $x$-simple by the Hermite-Ostrogradsky reduction, and $(g,r) \leftarrow (g+u,r+v)$
          \end{itemize}
\begin{itemize}
\item[(5)] {\bf for} each $j \in J$ {\bf do}
\begin{itemize}
           \item[] compute $g_j, r_j \in F_n$ such that $a_j T = g_j^\prime + r_j$ by the Hermite reduction in \cite{BCCLX2013} \\
                   $(g, r) \leftarrow (g+g_j, r+r_j)$
           \end{itemize}
           {\bf end do}
\item[(6)] {\bf return} $g,r$
\end{itemize}
The correctness of Algorithm {\bf AD\_RHT} is immediate from Corollary \ref{COR:add} and the paragraphes  for establishing $F_n=F_n^\prime  + R$ in the proof of Theorem \ref{TH:add}.

\begin{example} \label{EX:add1}
Find an additive decomposition of
$$f=-\frac{\exp(x)(x-1)}{\exp(x^2/2)}+\frac{\exp(-1/x)}{(1+\exp(x^2/2))^2}+\frac{x}{(\exp(-1/x)+x)^2} $$
in the rationally hyperexponential tower
$$F_3=C(x)\big(\underbrace{\exp(x)}_{t_1}, \, \underbrace{\exp\left(x^2/2\right)}_{t_2}, \, \underbrace{\exp(-1/x)}_{t_3}\big).$$
In the tower $F_3$, $f=-(x-1)t_1t_2^{-1}+t_3/(1+t_2)^2+x/(t_3+x)^2.$  Algorithm {\bf AD\_RHT} yields 
$$f = \left(-\frac{x^2}{(x-1)(t_3+x)}+\frac{t_3}{x(1+t_2)}+t_1t_2^{-1}\right)'+\underbrace{\frac{(x^3+x-1)t_3}{x^3(1+t_2)}+\frac{x^2-3x+1}{(x-1)^2(t_3+x)}}_r.$$
We conclude $f \notin F_3^\prime$ by $r \neq 0$.
\end{example}
\begin{example}\label{EX:add2}
Let $t_1=\exp(x),\, t_2=y$, where $y$ is given in Example \ref{EX:elem}, and let $f$ be the same as that in Example \ref{EX:elem}.
By Algorithm {\bf AD\_RHT}, 
\begin{equation}\label{EQ:add2}
  f=\left(\frac{1}{t_2 + 1}\right)'+\underbrace{\frac{(x^3-x-3)t_1}{(x^3-x-2)(t_1+t_2)}}_r.
\end{equation}
 So $f \notin F_2^\prime$, because $r \neq 0$.
\end{example}
\subsection{Logarithmic derivatives in rationally hyperexponential towers} \label{SUBSECT:logder}
The notion and basic properties of residues  are described in \cite[\S 4.4]{Bron2005}.
Let $f$ be a nonzero element of $F(t)$, and $\den(f)$ be nontrivially normal.
Then the nonzero residues of $f$ are exactly the roots of its Rothstein-Trager resultant by \cite[Theorem 4.4.3]{Bron2005}.
Residues are closely related to elementary integrals according to  \cite[Theorem 3.1]{Raab2012a}.

\begin{example} \label{EX:log}
Let $p \in F[t]$ be a normal polynomial of positive degree $d$. Then the Rothstein-Trager resultant of $p^\prime/p$ is equal to 
$(-1)^d \resultant_t (p^\prime, p) (z-1)^d.$
Thus,  all nonzero residues of $p^\prime/p$ are equal to $1$.  It follows from the logarithmic derivative identity in \cite[Theorem 3.1.1]{Bron2005} that the residues of a logarithmic derivative in $F(t)$
are integers.
\end{example}

We are going to describe the residues of the projections of logarithmic derivatives in a rationally hyperexponential tower.
\begin{lemma} \label{LM:log}
 Let $t$ be a hyperexponential monomial over $F$, and $p \in F[t]$ be normal. Then
 \[
   \frac{p^\prime}{p} = \deg_t(p) \frac{t^\prime}{t} + \frac{a^\prime}{a} + h
 \]
 for some $a \in F$ and some $t$-simple $h \in F(t)$.  Moreover, all the nonzero residues of $h$ are equal to $1$.
 \end{lemma}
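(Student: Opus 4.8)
The plan is to write $p$ in terms of its leading coefficient and a monic part, peel off the contribution of $t'/t$, and then invoke the weak-normalization machinery via the generalized kernel-shell reduction. First I would write $p = c \tilde p$ with $c = \lc_t(p) \in F$ and $\tilde p$ monic in $t$ of degree $d := \deg_t(p)$; then $p'/p = c'/c + \tilde p'/\tilde p$, so it suffices to handle the monic case and absorb $c'/c$ into the $a'/a$ term at the end. Writing $\tilde p = t^d + b_{d-1}t^{d-1} + \cdots + b_0$ with $b_j \in F$, and using that $(t^d)' = d\,t^{d-1} t'$, a direct computation gives $\tilde p' = d\,\tilde p\,\frac{t'}{t} + r$, where $r = (t^d)' - d\,t^d\frac{t'}{t} + \sum_{j<d}(b_j t^j)' = \sum_{j<d}\bigl(b_j' + (j-d)\tfrac{t'}{t}\,b_j\bigr)t^j$ has $\deg_t r < d$. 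Hence
\[
  \frac{\tilde p'}{\tilde p} = d\,\frac{t'}{t} + \frac{r}{\tilde p},
\]
and since $\gcd(r,\tilde p)$ may be nontrivial we reduce the fraction $r/\tilde p = \tilde r/\tilde q$ with $\tilde q \mid \tilde p$; because $\tilde p$ is normal, so is $\tilde q$, and $\tilde r/\tilde q$ is $t$-proper with $\deg_t(\tilde r) < \deg_t(\tilde q)$ — i.e. it is $t$-simple. Setting $h := \tilde r/\tilde q$ and combining with $c'/c$ (take $a := c$) yields the displayed decomposition, with $h$ $t$-simple as claimed.

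The remaining, slightly more delicate, point is that every nonzero residue of $h$ equals $1$. Here I would argue directly with residues rather than through the $t'/t$ rewriting: the residue of $\tilde p'/\tilde p = p'/p - c'/c$ at a root $\alpha$ of an irreducible factor $q_0 \mid \tilde p$ is the usual logarithmic-derivative residue, namely the multiplicity of $q_0$ in $\tilde p$, which is $1$ since $\tilde p$ is normal; and $c'/c \in F$ contributes no pole in $t$, hence no residue at any normal irreducible $q_0$. On the other hand, $d\,t'/t$ is $t$-reduced (its denominator is a power of $t$, which is special, not normal), so it too contributes nothing to the residue of $h$ at a normal prime. Therefore $\residue_{q_0}(h) = \residue_{q_0}(\tilde p'/\tilde p) = 1$ for every normal irreducible $q_0 \mid \den(h)$, and $h$ has no other nonzero residues; this is exactly the assertion, and it is consistent with Example~\ref{EX:log}.

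The main obstacle I anticipate is purely bookkeeping: one must be careful that $t$ itself (being special when $t$ is hyperexponential, by Example~\ref{EX:ns}) is \emph{not} a normal factor, so the denominator of the $t$-simple part $h$ genuinely divides $\tilde p$ after clearing the common factor with $r$, and that the residue computation is carried out at normal primes only, where $t'/t$ is regular. A subtle sign/index check is needed in verifying $\deg_t r < d$ — the $t^{d-1}$ coefficient of $(t^d)'$ is $d\,\lc_t(t')$ if $\deg_t(t') = 1$, which must cancel against the $t^{d-1}$ term coming from $d\,t^d \cdot (t'/t)$; this cancellation is automatic from the identity $(t^d)' = d\,t^{d-1}t' = d\,t^d(t'/t)$, so no genuine difficulty arises, only attention to detail. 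Everything else follows from the logarithmic derivative identity \cite[Theorem 3.1.1]{Bron2005} and the characterization of residues of logarithmic derivatives as multiplicities.
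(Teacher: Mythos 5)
Your proposal is correct and takes essentially the same approach as the paper's proof: in both, one extracts $\deg_t(p)\,t'/t$ together with the leading-coefficient logarithmic derivative, leaving a $t$-proper remainder whose denominator divides the normal polynomial $p$, and the residue claim is then settled via Example~\ref{EX:log}. A minor slip: your intermediate expression $r=(t^d)'-d\,t^d\tfrac{t'}{t}+\sum_{j<d}(b_j t^j)'$ omits the term $-d\sum_{j<d}b_j\tfrac{t'}{t}t^j$, although your displayed final formula for $r$ is correct.
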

 \begin{proof}
 Let $p=a t^m + q$, where $m>0$, $a \in F\setminus \{0\}$ and $q \in F[t]$ with degree lower than $m$.
 Then 
 $p^\prime = \left( a^\prime + m a {t^\prime}/{t} \right) t^m  + q^\prime$ and $\deg \left(q^\prime\right)<m$.
 So 
 $    {p^\prime}/{p} =  m {t^\prime}/{t} + {a^\prime}/{a}   + {r}/{p}, $
 where $r$ is the remainder  of $p^\prime$ and $p$ with respect to $t$.   Setting $h=r/p$ proves the first conclusion.  
By \cite[Theorem~4.4.1]{Bron2005}, taking residues is $F$-linear. Therefore, the residues of $p^\prime/p$ are equal to those of $h$, because both $t^\prime/t$ and $a^\prime/a$ are free of $t$.
The second conclusion then follows from Example \ref{EX:log}. $\Box$
\end{proof}
\begin{proposition} \label{PROP:logder}
Let  the tower $F_n$ in \eqref{EQ:tower} be rationally hyperexponential,  and let $f \in F_n$ be nonzero. Then $\pi_0(f^\prime/f) \in F_0$ and 
$\pi_i(f^\prime/f)$ is a $t_i$-simple element with integral residues for all $i \in [n]$.
\end{proposition}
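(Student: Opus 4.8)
The plan is to factor any nonzero $f \in F_n$ into a product of normal and special parts relative to the outermost monomial $t_n$, take logarithmic derivatives, and then descend through the tower. Write $f = c\cdot t_n^{m_n}\cdot\prod_k p_k^{e_k}\cdot w$, where $c \in F_{n-1}$, $m_n \in \bZ$, the $p_k \in F_{n-1}[t_n]$ are the monic normal irreducible factors of $\num(f)\den(f)$ with multiplicities $e_k \in \bZ$, and $w$ is a unit (here I use that the only special irreducible in $F_{n-1}[t_n]$ is $t_n$ itself by Example~\ref{EX:ns}; more carefully $w$ is a product of nontrivial special factors but by regularity and hyperexponentiality these are all powers of $t_n$, absorbed into $t_n^{m_n}$). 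Then $f'/f = m_n\, t_n'/t_n + c'/c + \sum_k e_k\, p_k'/p_k$. By Lemma~\ref{LM:log} applied to each $p_k$, we get $p_k'/p_k = \deg_{t_n}(p_k)\,t_n'/t_n + a_k'/a_k + h_k$ with $a_k \in F_{n-1}$ and $h_k \in F_n$ being $t_n$-simple with all nonzero residues equal to $1$. Collecting terms, $f'/f = \big(m_n + \sum_k e_k\deg_{t_n}(p_k)\big)\,t_n'/t_n + b'/b + \sum_k e_k h_k$ for some $b \in F_{n-1}$. Since $t_n'/t_n \in F_0 \subseteq F_{n-1}$ and $b'/b \in F_{n-1}$, the $L_n$-component $\pi_n(f'/f)$ is exactly the normally-$t_n$-proper part of $\sum_k e_k h_k$, which is itself $t_n$-simple (a sum of $t_n$-simple elements with coprime-to-each-other denominators is $t_n$-simple), and its residues are $\bZ$-linear combinations of the residues of the $h_k$, hence integers. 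So $\pi_n(f'/f)$ is $t_n$-simple with integral residues, and everything else — $\big(m_n + \sum_k e_k\deg_{t_n}(p_k)\big)t_n'/t_n + b'/b$ — lies in $F_{n-1}$.

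Now I would induct. The element $g := \big(m_n + \sum_k e_k\deg_{t_n}(p_k)\big)t_n'/t_n + b'/b \in F_{n-1}$ is precisely $\sum_{i=0}^{n-1}\pi_i(f'/f)$, since $f'/f = g + \pi_n(f'/f)$ and $\pi_n$ kills everything in $F_{n-1}$ while $\pi_i$ for $i \le n-1$ kills $\pi_n(f'/f)$ (which is normally $t_n$-proper and nonzero only in the $t_n$ place). The subtlety is that $g$ is not literally a logarithmic derivative in $F_{n-1}$: it is $t_n'/t_n$ (times an integer) plus the logarithmic derivative of $b$. But $t_n'/t_n \in F_0$, so $g = (b u)'/(bu)$ where... no — better: $t_n'/t_n = u'/u$ for no $u \in F_{n-1}$ in general. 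Instead I observe that $g$ equals $\ell\cdot t_n'/t_n + b'/b$ with $\ell \in \bZ$, and $t_n'/t_n$ is itself a logarithmic derivative only in $F_n$, not $F_{n-1}$. The fix is to note that for the purposes of computing $\pi_i$ with $i < n$, what matters is the residue structure in $F_{i-1}[t_i]$, and $t_n'/t_n \in F_0$ contributes nothing to any $\den$ in those rings; more cleanly, I apply the inductive hypothesis not to $g$ directly but by repeating the factorization argument: $g$ is a sum of an element of $F_0$ (namely $\ell\, t_n'/t_n$) and the logarithmic derivative $b'/b$ with $b \in F_{n-1}$, and the first summand lies in $L_0$ so affects only $\pi_0$, while the second is handled by induction. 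Thus $\pi_i(f'/f) = \pi_i(b'/b)$ for $1 \le i \le n-1$ and $\pi_0(f'/f) = \pi_0(b'/b) + \ell\, t_n'/t_n \in F_0$, and by induction $\pi_i(b'/b)$ is $t_i$-simple with integral residues.

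The base case is the tower $F_1 = C(x)(t_1)$ with $t_1'/t_1 \in C(x)$: for nonzero $f \in F_1$, the argument above gives $f'/f = \ell\, t_1'/t_1 + b'/b + \pi_1(f'/f)$ with $b \in C(x)$ and $\pi_1(f'/f)$ $t_1$-simple with integral residues, while $\ell\, t_1'/t_1 + b'/b \in C(x) = F_0$, so $\pi_0(f'/f) \in F_0$. I expect the main obstacle to be bookkeeping the special/normal factorization carefully across the whole tower — in particular justifying that the "leftover unit" $w$ really is a power of $t_n$ (which needs Theorem~5.1.2 of \cite{Bron2005}, available to us since each $t_i$ is regular and hyperexponential, recorded in Example~\ref{EX:ns}), and justifying that a finite sum of $t_i$-simple elements whose denominators share no common normal irreducible factor is again $t_i$-simple, so that $\pi_i(f'/f)$ genuinely lands in the $t_i$-simple elements rather than merely the normally $t_i$-proper ones. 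The residue-integrality then rides along for free from Lemma~\ref{LM:log} and the $F$-linearity of taking residues (\cite[Theorem~4.4.1]{Bron2005}).
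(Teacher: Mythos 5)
Your argument is correct and is essentially the same as the paper's: both expand $f'/f$ via the logarithmic derivative identity into an integer multiple of $t_n'/t_n$, a logarithmic derivative $b'/b$ with $b\in F_{n-1}$, and a $\bZ$-linear combination of $p_k'/p_k$ for normal irreducibles $p_k$; both then invoke Lemma~\ref{LM:log} to identify $\pi_n(f'/f)$ as a $t_n$-simple element with integral residues; and both descend to $F_{n-1}$ by induction on $n$. The only cosmetic difference is that you spell out the factorization $f=c\,t_n^{m_n}\prod_k p_k^{e_k}$ explicitly, whereas the paper packages the corresponding step as a stand-alone Claim about $\pi_i(p'/p)$ before inducting.
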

\begin{proof} Let $i \in [n]$, and let $p \in F_{i-1}[t_i]$ be nontrivially normal with respect to $t_i$.

\smallskip \noindent
\emph{Claim.}  The projection $\pi_i\left( p^\prime/p \right)$ is $t_i$-simple, and
$p^\prime/p = s + a^\prime/a  + \pi_i\left(p^\prime/p \right)$
for some $s \in F_0$ and~$a \in F_{i-1}$. Moreover, all nonzero residues of $\pi_i\left( p^\prime/p \right)$ are equal to $1$.

\smallskip \noindent
\emph{Proof of the claim.} By Lemma \ref{LM:log}, there exists an element  $a \in F_{i-1}$ and a $t_i$-simple element $q \in F_i$ such that
$p^\prime/p = \deg_{t_i}(p) t_i^\prime/t_i + a^\prime/a  + q,$
and that all nonzero residues of $q$ are equal to $1$. Since $t_i^\prime/t_i \in F_0$, we can set $s= \deg_{t_i}(p) t_i^\prime/t_i$.
By the definition of Laurent-Matryoshka decompositions,  we have that $q = \pi_i(p^\prime/p)$.  The claim is proved.

Based on the claim, we proceed by induction on $n$. For $n=1$, the logarithmic derivative identity implies 
${f^\prime}/{f} = f_0 + \sum_{j \in [k]} m_j {p_j^\prime}/{p_j},$ where $f_0 \in F_0$, $p_j$ is a nontrivially normal polynomial in~$F_0[t_1]$, and $m_j$ is a nonzero integer. 
It follows that $\pi_1 \left(  {f^\prime}/{f} \right) =  \sum_{j \in [k]} m_j \pi_1 \left({p_j^\prime}/{p_j}\right).$
The claim implies that~$\pi_1 \left(f^\prime/f\right)$ is $t_1$-simple and its residues belong to $\bZ$. 
The claim also implies that
$\pi_0(f^\prime/f)$ is equal to $f_0 + u + {v^\prime}/{v}$
for some $u, v \in F_0$. So $\pi_0(f^\prime/f)$ belongs to $F_0$.

Assume that $n>1$ and the lemma holds for $n-1$.  Again,  the logarithmic derivative identity implies   
${f^\prime}/{f} = u + {g^\prime}/{g} + \sum_{j \in [k]}m_j {p_j^\prime}/{p_j},$
where $u \in F_0$, $g \in F_{n-1}$, $p_j \in F_{n-1}[t_n]$ is nontrivially normal, and $m_j \in \bZ$. Then
$\pi_n \left(  {f^\prime}/{f} \right) =  \sum_{j \in [k]} m_j \pi_n \left({p_j^\prime}/{p_j}\right),$
which is $t_n$-simple and has only integral residues by the claim.
Moreover, there exists $v \in F_0$ and $h \in F_{n-1}$ such that 
$f^\prime/f=v+h^\prime/h + \pi_n(f^\prime/f)$. 
By the induction hypothesis,  $\pi_\ell (h^\prime/h)$ is $t_\ell$-simple and has only integral residues, and $\pi_0\left(h^\prime/h\right) \in F_0$.
The induction is completed by the observation that $\pi_\ell(f^\prime/f)= \pi_\ell(h^\prime/h)$ for all $\ell \in [n-1]$. $\Box$
\end{proof}

\subsection{Computing elementary integrals} \label{SUBSECT:elem}
An element $f$ of $F$ has an elementary integral over $F$ if
there exists an elementary extension $E$ of~$F$ such that $f \in E^\prime$.
With the aid of remainders and residues, we determine whether $f$ has an elementary integral over a rationally hyperexponential tower.

\begin{theorem} \label{TH:elem}
Let $F_n$ in \eqref{EQ:tower} be a rationally hyperexponential tower, and $f \in F_n$ with the remainder $r$.
Then $f$ has an elementary integral over $F_n$ if and only if $\pi_0(r) \in F_0$ is $x$-simple, and, for all $i \in [n]$,
 $\pi_i(r) \in F_i$ is a $t_i$-simple element whose residues are in the algebraic closure $\overline{C}$ of $C$.
\end{theorem}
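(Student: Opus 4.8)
The plan is to reduce the statement to the classical criterion for elementary integrability (Liouville's theorem in the form presented in \cite[\S5.1]{Bron2005} or \cite[Theorem 3.1]{Raab2012a}) applied to the remainder $r$, using the direct sum $F_n = F_n^\prime \oplus R$ from Theorem~\ref{TH:add}. Since $f = g^\prime + r$ with $g \in F_n$, the function $f$ has an elementary integral over $F_n$ if and only if $r$ does; so we may work entirely with $r \in R$. Writing the Laurent--Matryoshka decomposition $r = \pi_0(r) + \pi_1(r) + \cdots + \pi_n(r)$, the structure of $R$ tells us that each $\pi_i(r)$ with $i \in [n]$ is already a sum $\sum_{T \in \bT_i} r_T T$ with $r_T$ being $t_i$-simple, and $\pi_0(r)$ is a sum of an $x$-simple rational function and terms $r_T(\eta_T^{-1}T)$ with $r_T$ a residual form.

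First I would establish the \emph{easy direction}. Suppose each $\pi_i(r)$ satisfies the stated conditions, i.e. $\pi_0(r)$ is $x$-simple in $F_0$ and each $\pi_i(r)$ ($i \in [n]$) is $t_i$-simple with residues in $\overline{C}$. For $i \in [n]$, a $t_i$-simple element times the hyperexponential factor $T$ has an elementary integral over $F_n$ precisely when its residues are constants (over $\overline C$): indeed, by a Rothstein--Trager / partial-fraction argument each such term integrates to a $\overline C$-linear combination of logarithms of the $\gcd$'s $\gcd(\mathrm{num}(\pi_i(r)) - c\,\mathrm{den}(\pi_i(r))', \mathrm{den}(\pi_i(r)))$, possibly together with $T$ itself; here one uses $\deg_{t_i}(t_i^\prime)\le 1$ (guaranteed by $t_i^\prime/t_i\in F_0$) so no polynomial part in $t_i$ obstructs. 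The $x$-simple part $\pi_0(r)$ integrates to a $\overline C$-combination of logarithms by the classical Hermite--Ostrogradsky / Rothstein--Trager theory, and the residual-form pieces $r_T(\eta_T^{-1}T)$ are handled by the criterion of \cite[Theorem~4]{GLL2004} and \cite[Lemma~11]{BCCLX2013}, whose residue conditions transfer verbatim from $C$ to $\overline C$. Summing, $r$ (hence $f$) has an elementary integral.

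For the \emph{converse}, assume $f$ has an elementary integral over $F_n$; then so does $r$. By Liouville's theorem there exist $c_1,\dots,c_m \in \overline{C}$, $v_1,\dots,v_m, v_0$ in a finite algebraic extension $\overline{F_n} = \overline C \otimes_C F_n$ such that $r = v_0^\prime + \sum_j c_j v_j^\prime/v_j$. The key step is to apply $\pi_i$ to this identity: by Lemma~\ref{LM:der} (extended over $\overline C$, which changes nothing since the $t_i^\prime/t_i$ still lie in $\overline{F_0}$) we have $\pi_i(r) = \pi_i(v_0)^\prime + \sum_j c_j\, \pi_i(v_j^\prime/v_j)$, and by Proposition~\ref{PROP:logder} each $\pi_i(v_j^\prime/v_j)$ for $i \in [n]$ is $t_i$-simple with residues in $\bZ$, while $\pi_0(v_j^\prime/v_j) \in \overline{F_0}$. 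Since $\pi_i(r)$ is known to lie in $R_i$, an argument on orders forces $\pi_i(v_0)$ to be $t_i$-reduced for each $i\in[n]$ (otherwise its derivative contributes a pole of order $\ge 2$ that cannot cancel against $t_i$-simple logarithmic-derivative terms), so $\pi_i(v_0)^\prime$ contributes nothing $t_i$-simple; therefore $\pi_i(r)$ equals a $\overline C$-combination of $t_i$-simple logarithmic derivatives, hence is itself $t_i$-simple with residues in $\overline C$ (the residues being $\overline C$-combinations of the integral residues of the $\pi_i(v_j^\prime/v_j)$). The part $\pi_0(r)$ is treated with the $C(x)$-theory base-changed to $\overline C$: it must be $x$-simple because the residual-form and $x$-simple description of $R_0$ together with Liouville's theorem over $\overline{C(x)}$ leaves no room for a non-$x$-simple contribution.

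I expect the \textbf{main obstacle} to be the converse direction, specifically the bookkeeping needed to pass from a Liouville decomposition over the algebraic closure $\overline{F_n}$ back to the $C$-rational tower structure, and to justify rigorously that applying $\pi_i$ to a logarithmic-derivative term $v_j^\prime/v_j$ (with $v_j$ possibly algebraic over $C$) still yields a $t_i$-simple object with residues in $\overline C$ — this requires checking that Proposition~\ref{PROP:logder} and Lemma~\ref{LM:der} survive the scalar extension $C \rightsquigarrow \overline C$, and that $C_{\,\overline{F_n}} = \overline C$ so that regularity of the tower is preserved. A secondary subtlety is the treatment of the residual-form summands $r_T(\eta_T^{-1}T)$ in $R_0$: these are not rational over $F_0$ in the naive sense but are governed by \cite[Lemma~11]{BCCLX2013}, and one must confirm that lemma's criterion for being a derivative (equivalently, having an elementary integral) is exactly ``$r_T$ $x$-simple with constant residues'' after replacing $C$ by $\overline C$. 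Once these base-change facts are in place, the proof assembles by projecting Liouville's identity componentwise and invoking Theorem~\ref{TH:add} for uniqueness of $r$.
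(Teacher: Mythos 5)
Your overall architecture matches the paper's: use Liouville to write $r$ as a derivative plus a $\overline C$-combination of logarithmic derivatives, project componentwise via $\pi_i$, invoke Proposition~\ref{PROP:logder} together with an argument on orders, and close the loop with Theorem~\ref{TH:add}. However, there are two concrete problems.

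\textbf{Forward direction.} Your claim that ``a $t_i$-simple element times the hyperexponential factor $T$ has an elementary integral over $F_n$ precisely when its residues are constants'' is false for $T\neq 1$: a nonzero $t_i$-simple $h\in F_i$ multiplied by $T\in\bT_i\setminus\{1\}$ is governed by the Hermite-reduction criterion (cf.\ \cite[Lemma 11]{BCCLX2013}), not by a residue condition. The point you have not fully exploited is that the theorem's hypothesis already asserts $\pi_i(r)\in F_i$ and $\pi_0(r)\in F_0$, so all the $T\neq 1$ terms (including the ``residual-form pieces'' $r_T(\eta_T^{-1}T)$ you set out to handle) are \emph{zero} by assumption, and nothing beyond the one-variable theory is needed. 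The paper makes this clean by citing \cite[Lemma 3.1(i)]{DGGL2023} to write each $\pi_i(r)$ as a $\overline C$-combination of logarithmic derivatives plus a polynomial which, because $t_i'/t_i\in F_0$, actually lies in $F_0$.

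\textbf{Converse direction.} You go to the algebraic closure $\overline{F_n}$ and list the resulting base-change verifications as your ``main obstacle,'' but you do not resolve them. The paper avoids nearly all of this by invoking the strong Liouville theorem \cite[Theorem 5.5.3]{Bron2005}, which keeps $g\in F_n$ and pushes only the constants $c_j$ to $\overline C$ (the $u_j$ live in a finite extension $F_n(c_1,\ldots,c_k)$, which is again a rationally hyperexponential tower, so Proposition~\ref{PROP:logder} and Lemma~\ref{LM:der} apply verbatim). Your order argument showing $\pi_i(g')$ contributes nothing $t_i$-simple for $i\in[n]$ is correct and is exactly what the paper does. But for $i=0$ your proof hand-waves (``leaves no room for a non-$x$-simple contribution''): the actual argument is that $\pi_0(r)\equiv w\pmod{F_n'}$ for some $x$-simple $w\in F_0$ after Hermite--Ostrogradsky, and then $\pi_0(r)-w\in R_0\cap F_n'=\{0\}$ by the direct-sum statement in Theorem~\ref{TH:add}. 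That use of the direct sum is the key step you have not supplied; ``uniqueness of $r$'' is a consequence of it, not a substitute for spelling it out here.
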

\begin{proof} Assume that $\pi_0(r) \in F_0$ is $x$-simple,  $\pi_i(r) \in F_i$ is $t_i$-simple, and all the residues of $\pi_i(r)$ belong to $\overline{C}$ for all $i \in [n]$. 
Then $\pi_i(r)$ is the sum of a $\overline C$-linear combination of logarithmic derivatives
and a polynomial $u$ in $F_{i-1}[t_i]$ by \cite[Lemma 3.1(i)]{DGGL2023}. Since $t_i^\prime/t_i \in F_0$, the polynomial $u$ belongs to~$F_0$ by the expression for $u$ in the proof
of \cite[Lemma 3.1(i)]{DGGL2023}.
It follows that $\pi_i(r)$
has an elementary integral over $F_i$ for all $i \in [n]$,
which, together with $\pi_0(r) \in F_0$, implies that $r$ has an elementary integral over~$F_n$, and so does $f$.

To show the converse, we assume that $f$ has an elementary integral over $F_n$.
Then $r$ also has an elementary integral over the same tower. By \cite[Theorem 5.5.3]{Bron2005},
there exist $g, u_1, \ldots, u_k \in F_n$ and $c_1, \ldots, c_k \in \overline{C}$ such that
$$
r = g^\prime + \sum_{j \in [k]} c_j \frac{u_j^\prime}{u_j}.
$$
Accordingly, for each $i \in [n]_0$,
\begin{equation} \label{EQ:liouville}
\pi_i(r) = \pi_i(g^\prime) + \sum_{j \in [k]}  c_j \pi_i \left(\frac{u_{j}^\prime}{u_{j}} \right).
\end{equation}

First, we show that $\pi_0(r)$ is $x$-simple. By Proposition \ref{PROP:logder}, 
$\pi_0 \left({u_{j}^\prime}/{u_{j}}\right)$ in \eqref{EQ:liouville} belongs to $F_0$ for all~$j \in [k]$.
So $\pi_0(r) \equiv w  \mod F_n^\prime$ for some $w \in F_0$ by $\pi_0(g^\prime)=\pi_0(g)^\prime$ in Lemma \ref{LM:der}.
We may further assume that~$w$ is $x$-simple by the Hermite-Ostrogradsky reduction.
Therefore, $\pi_0(r)-w \in R_0 \cap F_n^\prime$.
By Theorem~\ref{TH:add}, we have that $\pi_0(r) = w$. Consequently, $\pi_0(r)$ is $x$-simple.

It remains to show that $\pi_i(r)$ is $t_i$-simple and has only constant residues for all $i \in [n]$.
By Proposition~\ref{PROP:logder}, $\pi_i \left({u_{j}^\prime}/{u_{j}} \right)$ in \eqref{EQ:liouville} is $t_i$-simple for all $j \in [k]$.  
Since the coefficient of every power of $t_{i+1}, \ldots, t_n$ in $\pi_i(r)$ is $t_i$-simple,
we have that $\pi_i(g^\prime)=0$ by $\pi_i(g^\prime)=\pi_i(g)^\prime$ in Lemma~\ref{LM:der} and an argument on orders. 
Then $\pi_i(r) = \sum_{j \in [k]}  c_j \pi_i \left({u_{j}^\prime}/{u_{j}}\right)$ by  \eqref{EQ:liouville}.
Consequently, $\pi_i(r)$ is a $t_i$-simple element whose residues belong to $\overline{C}$ by Proposition \ref{PROP:logder}.  $\Box$
\end{proof}
Algorithms for determining constant residues are given in \cite[\S 5.6]{Bron2005} and \cite{Raab2012a,DGGL2023}.

\begin{example}
  Let us reconsider the tower $F_3$ and the function $f$ in Example \ref{EX:add1}. Note that
  $$\pi_2(r)= \frac{x^3+x-1}{x^3(t_2+ 1)}t_3,$$ which is not $t_2$-simple. So  $f$ has no elementary integral over $F_3$ by Theorem \ref{TH:elem}.
\end{example}
\begin{example}
  Let us reconsider the tower $F_2$ and the function $f$, which are the same as those in Examples \ref{EX:elem} and \ref{EX:add2}. By \eqref{EQ:add2}, the remainder $r$ of $f$ has only one nonzero projection, which is 
  $$\pi_2(r)=\frac{(x^3-x-3)t_1}{(x^3-x-2)(t_1+t_2)}.$$
  It is $t_2$-simple. An algorithm for determining constant residues yields
  $$ \pi_2(r) =\frac{(t_1+t_2)'}{t_1+t_2}-\frac{1}{x^3-x-2}.$$
  The nonzero residues of $\pi_2(r)$ are all equal to $1$.  So $f$ has an elementary integral over $F_2$, which is 
  $$\int f  = \frac{1}{1+t_2}+\log\left(t_1+t_2\right)- \sum_{\alpha^3 - \alpha - 2 =0} \frac{1}{3 \alpha^2-1} \log(x-\alpha).$$
\end{example}
%
%



\end{document}